\newcommand{\Wpsi}{{\eta}}
\newcommand{\hdherr}{O\left( |h|^2_{\backg}\, |\bcov h|_{\backg}\right)}
\newcommand{\hdhsqerr}{O\left( |h|_{\backg}\, |\bcov h|^2_{\backg}\right)}
\newcommand{\herr}{O\left( |h|^3_{\backg}\right)}
\definecolor{applegreen}{rgb}{0.55, 0.71, 0.0}
\definecolor{armygreen}{rgb}{0.29, 0.33, 0.13}
\definecolor{caribbeangreen}{rgb}{0.0, 0.8, 0.6}
\newtheorem{thm}{Theorem}[section]
\newtheorem{theorem}[thm]{Theorem}
\newtheorem{lem}[thm]{Lemma}
\newtheorem{Def}{Definition}[section]
\newtheorem{prop}[thm]{Proposition}
\newtheorem{rem}[Def]{Remark}
\newcommand{\bcheckpsi}{{\check{\psi}}}
\newcommand{\sqrtbg}{{d\mu_{\backg}}}
\newcommand{\adsR}[1]{\overline{R}\tensor{\vphantom{R}}{#1}}
\definecolor{orange}{RGB}{255,127,0}
\newcommand{\quadratic}[1]{{#1}}
\newcommand{\gauge}[1]{{#1}}
\newcommand{\bcov}{\back{D}} 
\newcommand{\ptcheck}[1]{\ptc{checked on #1}}
\newcommand{\red}[1]{{\color{red} #1}}
\newcommand{\bR}{\back{R}} 
\newcommand{\Bgamma}{\backGamma{}}
\DeclareFontFamily{OT1}{rsfs}{}
\DeclareFontShape{OT1}{rsfs}{m}{n}{ <-7> rsfs5 <7-10> rsfs7 <10-> rsfs10}{}
\DeclareMathAlphabet{\mycal}{OT1}{rsfs}{m}{n}
\global\let\AddToReset=\@addtoreset}
\newcounter{mnotecount}[section]
\newcommand{\T}{\mathbb{T}}
\newcommand{\back}[1]{\overline{#1}} 
\newcommand{\backg}{{{\overline{g}}}} 
\newcommand{\backGamma}{{{\overline{\Gamma}}}} 
\newcommand{\redad}[2]{\ptc{change or addition or rewording on #1}{\color{red} #2}}
\newcommand{\eel}[1]{\label{#1}\end{equation}}
\newcommand{\eeal}[1]{\label{#1}\end{eqnarray}}
\newcommand{\bel}[1]{\begin{equation}\label{#1}}
\newcommand{\bea}{\begin{eqnarray}}
\newcommand{\bean}{\begin{eqnarray}\nonumber}
\newcommand{\beal}[1]{\begin{eqnarray}\label{#1}}
\newcommand{\eea}{\end{eqnarray}}
\newcommand{\nn}{\nonumber}
\def\typeout{:<+ #.tex}\input{#}\typeout{:<-}1{\typeout{:<+ #1.tex}\input{#1}\typeout{:<-}}
\newcommand{\qed}{\hfill $\Box$}
\newcommand{\qedskip}{\hfill $\Box$\medskip}
\newcommand{\proof}{\noindent {\sc Proof:\ }}
\newcommand{\be}{\begin{equation}}
\newcommand{\eeq}{\end{equation}}
\newcommand{\ee}{\end{equation}}
\newcommand{\beqa}{\begin{eqnarray}}
\newcommand{\eeqa}{\end{eqnarray}}
\newcommand{\beqan}{\begin{eqnarray*}}
\newcommand{\eeqan}{\end{eqnarray*}}
\newcommand{\ba}{\begin{array}}
\newcommand{\ea}{\end{array}}
\newcommand{\mnote}[1]
{\protect{\stepcounter{mnotecount}}$^{\mbox{\footnotesize
$
\bullet$\themnotecount}}$ \marginpar{
\raggedright\tiny\em
$\!\!\!\!\!\!\,\bullet$\themnotecount: #1} }
\newcommand{\warn}[1]
{\protect{\stepcounter{mnotecount}}$^{\mbox{\footnotesize
$
\bullet$\themnotecount}}$ \marginpar{
\raggedright\tiny\em
$\!\!\!\!\!\!\,\bullet$\themnotecount: {\bf Warning:} #1} }
\newcommand{\R}{\mathbb R}
\newcommand{\eq}[1]{(\ref{#1})}
\newcommand{\ptc}[1]{\mnote{{\bf ptc:}#1}}
\newcommand{\beaa}{\begin{eqnarray*}}
\newcommand{\eeaa}{\end{eqnarray*}}
\def\ben{\begin{equation}}
\def\een{\end{equation}}
\def\bena{\begin{eqnarray}}
\def\eena{\end{eqnarray}}
\def\f(#1/#2){\frac{#1}{#2}}
\def\Frac(#1/#2){\left(\frac{#1}{#2}\right)}
\def\chris(#1-#2-#3){{\mit \Gamma}^{#1}{}_{{#2}{#3}} }
\def\tilchris(#1-#2-#3){\tilde{{\mit \Gamma}}^{#1}{}_{{#2}{#3}}}
\def\hatchris(#1-#2-#3){\hat{{\mit \Gamma}}^{#1}{}_{{#2}{#3}}}
\DeclareFontFamily{OT1}{rsfs}{}
\DeclareFontShape{OT1}{rsfs}{m}{n}{ <-7> rsfs5 <7-10> rsfs7 <10-> rsfs10}{}
\def\hatu{{\hat u}}
\def\hatv{{\hat v}}
\def\hatw{{\hat w}}
\def\hattilu{{\widehat{\tilde u}}}
\def\hattilv{{\widehat{\tilde v}}}
\def\hattilw{{\widehat{\tilde w}}}
\def\HM{{\textrm{HM}}}
\newcommand{\hh}{{ {\hat{h}}}}
\newcommand{\hhone}{\stackrel{(1)}{ {\hat{h}}}}
\newcommand{\hhtwo}{\stackrel{(2)}{ {\hat{h}}}}
\newcommand{\hone}{\stackrel{(1)}{h}}
\newcommand{\htwo}{\stackrel{(2)}{h}}
\newcommand{\mone}{\stackrel{(1)}{m}}
\newcommand{\mtwo}{\stackrel{(2)}{m}}
\newcommand{\hphione}{\stackrel{(1)}{ {{\phi}}}}
\newcommand{\hphitwo}{\stackrel{(2)}{ {{\phi}}}}
\renewcommand{\ptc}[1]{}
\renewcommand{\ptcheck}[1]{}
\renewcommand{\redad}[2]{#2}
\begin{document}

\title{On the energy of asymptotically Horowitz-Myers metrics\footnote{Preprint UWThPh-2019-19}}

\author{Hamed Barzegar}
\email[]{hamed.barzegar@univie.ac.at}
\thanks{supported   by    the Austrian Science Fund (FWF) project No. P29900-N27. }
\affiliation{Faculty of Physics, 
University of Vienna}

\author{Piotr T.\ Chru\'{s}ciel}
\email[]{piotr.chrusciel@univie.ac.at}
\homepage[]{\\ http://homepage.univie.ac.at/piotr.chrusciel}
\thanks{supported in part by  the Austrian Science Fund (FWF) under {project P29517-N27} and by the Polish National Center of Science (NCN) under grant 2016/21/B/ST1/00940.}
\affiliation{Faculty of Physics, 
University of Vienna}

\author{Michael H\"orzinger}
\email[]{mi.hoerz@gmail.com}
\affiliation{Faculty of Physics, 
University of Vienna}

\author{Maciej Maliborski}
\email[]{maciej.maliborski@univie.ac.at}
\affiliation{Faculty of Physics, 
University of Vienna}

\author{Luc Nguyen}
\email[]{nguyenl@maths.ox.ac.uk}
\affiliation{Mathematical Institute and St. Edmund Hall, University of Oxford}

\date{\today}

\begin{abstract}
We present results supporting the Horowitz-Myers conjecture, that the Horowitz-Myers metrics minimise energy amongst metrics on the same spacetime manifold and with the same asymptotic behaviour.
 \ptc{title changed, abstract made more precise}
\end{abstract}

\pacs{}

\maketitle


\section{Introduction}

\redad{18VIII19}{
The question of positivity of the total mass in general relativity has played a central role in many investigations in mathematical and theoretical physics, with clear implications for stability, at least at a heuristic level, but also to issues such as the classification of static black hole solutions of Einstein equations~\cite{ChCo,bunting:masood} or  the resolution of the Yamabe problem~\cite{KhuriMarquesSchoen}.

The standard setting in this context is that of asymptotically Minkowskian spacetimes, but other asymptotic conditions are  also of interest. For instance,
Asymptotically Locally Hyperbolic (ALH) metrics have found a prominent place in models of theoretical physics (cf., e.g., \cite{deHaro:2000xn,RyuTakayanagi} and references therein). These models display a fascinating interplay between the positivity of energy and topology: Solutions with spherical conformal infinity are known to have positive mass~\cite{ChDelayHPET,Wang,ChHerzlich};  some classes of solutions with toroidal infinity are known to have positive mass~\cite{Wang} while  some classes of static solutions with toroidal or higher-genus infinity necessarily  have negative mass~\cite{GallowayWoolgar,LeeNeves}. The fact that a toroidal conformal geometry at infinity can be filled-in by a conformally compact filling in many ways~\cite{Anderson98} adds further ambiguities to the subject.
}

The Horowitz-Myers  (HM) solutions of vacuum Einstein equations with a negative cosmological constant,
\redad{18VIII19}{
also known as AdS solitons,
}
provide  an interesting example of  ALH strictly static metrics with total mass which is negative when compared with that of a locally maximally symmetric hyperbolic metric with the same toroidal conformal geometry at infinity.
\redad{18VIII19}{
The $(n+1)$-dimensional spacetimes manifolds underlying the HM metrics are $\R\times\R^2\times\T^{n-2}$, where $\T^{n-2}$ denotes an $(n-2)$-dimensional torus, with a conformal boundary at infinity diffeomorphic to $\R\times S^1\times\T^{n-2}= \R\times  \T^{n-1}$.
The explicit form of the metric can be found in Appendix~\ref{Ap4III18.1}, where we also provide some associated formulae as needed for the remainder of this work.
}

It has been conjectured in \cite[Conjectures 2 and 3]{HorowitzMyers} that, in spacetime dimension five and in
\cite[Section~4]{ConstableMyers} in all dimensions,
the negativity of the relative mass is a normalisation artifact, and that all solutions of the general relativistic constraint equations with the same manifold (``bulk'') topology and same conformal geometry at infinity will have energy larger than the corresponding Horowitz-Myers solution.
 \redad{18VIII19}{
The  relevance of the HM conjecture to  issues in AdS/CFT  was explained clearly in~\cite{HorowitzMyers,ConstableMyers,WoolgarRigidityHM,PageHM}, where it  is related  to consistency of the whole AdS/CFT program, is shown to imply uniqueness   of the AdS soliton~\cite{WoolgarRigidityHM}, and is relevant for  a proper understanding of phase transitions~\cite{PageHM}.
}

The first aim of this paper is to show that,
\redad{18VIII19}{
consistently with the above conjectures,
the (relative) mass of asymptotically HM solutions of the vacuum Einstein equations is indeed positive} for some classes of time-symmetric $n$-dimensional solutions of the general relativistic constraint equations, namely for $U(1)^{n-1}$ symmetric metrics with an orthogonally-transitive $U(1)^{n-2}$ subgroup, and for a subclass of
 $U(1)^{n-2}$ symmetric metrics with an orthogonally-transitive $U(1)^{n-2}$-action.
 
Next, recall that calculations supporting the conjecture  have been  given in~\cite{HorowitzMyers,ConstableMyers}  for
linearised perturbations of HM metrics satisfying a transverse-traceless condition. However, the most general linearised perturbations of HM metrics do not belong to this class, as follows from the non-vanishing of the right-hand side of \eqref{28II18.4} below for general perturbations. The second aim of this paper is to provide  evidence for positivity of energy for general small perturbations, though it should be recognised  that we fail to give a rigorous proof. On the way we   fill some gaps in the arguments of~\cite{HorowitzMyers,ConstableMyers} for the restricted class of perturbations considered in these references, and promote their linearised-perturbations argument to a result about small perturbations.


\section{Preliminaries}

Let $(M, \backg)$ be a smooth $n$-dimensional Riemannian manifold, $n \ge  2$.
\redad{19VII19}{
\emph{Static potentials}
}
are defined as   functions $V$ on $ M$  satisfying
 \begin{equation}\label{12VIII1711-}
    \bcov_i \bcov_ j V = V \left(\bR_{ij} -   \frac {\bR}{n-1}  \backg_{ij} \right)
\,.
\end{equation}
When $\backg$ has constant scalar curvature, an equivalent form is
\begin{equation}\label{12VIII1711}
  \Delta_{\backg} V + \lambda V = 0
  \,,
   \quad
    \bcov_i \bcov_ j V = V (\bR_{ij} - \lambda \backg_{ij} )
\,,
\end{equation}
for some constant $\lambda  \in \R$.
 Here $\bR_{ij}$ denotes the Ricci tensor of the metric
$\backg$, $\bcov$ the Levi-Civita connection of $\backg$, and $\Delta_{\backg} =
 \bcov^k \bcov_k$
is the Laplacian of
$\backg$.

When $\lambda<0$,  rescaling $\backg$ by a constant factor if necessary   we can without loss of generality assume that
$$\lambda=-n
$$
so that
\ptcheck{20XII17}
$$
 \bR := \backg^{ij}\bR_{ij} = \lambda (n-1)=-n(n-1)
 \,,
$$
and this normalisation will be often chosen. This is equivalent to setting $\ell=1$ in the HM metric\ \eqref{24II18.1} below and elsewhere in our equations.

Ignoring an overall dimension-dependent constant, we  use the
\redad{19IX19, and footnote expanded on 2X19}{
widely accepted definition%
\footnote{The formula we use for mass assigns the same number to a metric as the formulae due to Abbott-Deser~\cite{AbbottDeser}, Ashtekar-Magnon~\cite{AshtekarMagnonAdS}, de Haro et al.~\cite{deHaro:2000xn}, or the definition used in~\cite{ConstableMyers}, when simultaneously defined. Indeed, the fact that the Abbott-Deser mass coincides with the Hamiltonian mass of \cite{ChAIHP} follows by inspection of the formulae; the fact that our mass coincides with the Hamiltonian mass has been shown in~\cite[Appendix~B]{CJL}; the fact that the Abbott-Deser mass coincides with the holographic mass is~\cite[Equation (5.22)]{ChruscielSimon} (the formula there has been written in   $3+1$-dimensions but the calculation and the conclusion are  the same in all higher dimensions); the fact that the Ashtekar-Magnon formula coincides with the Hamiltonian one has been shown in~\cite{BCHKK}; the fact that the mass used in~\cite{ConstableMyers} coincides with the Abbott-Deser one follows immediately from the construction of the Abbott-Deser mass.}
of the   mass $m$   of a Riemannian metric $g$ asymptotic to a metric $\backg$ with a static potential $V$ in an equivalent, flexible form given in~\cite{ChHerzlich}:%
}

 \ptc{the referee correctly pointed out that the typesetting looks weird in preprint form, but is actually the right one for the two-column format of PRD here and in some other places}
\begin{multline}\label{2V18.11}
  m = \lim_{R\to\infty} \int_{r=R}
   \big[
   V  g^{ m j}  g^{ i\ell }
    \left(
      \bcov{_ m} g_{ j\ell }
     -
     \bcov{_\ell } g_{ j m}
    \right)
    \\
    +
       (g^{ m j}  g^{ k i}
      - g^{ i  j}  g^{ km}
      )
     (g_{ j m} -\backg_{jm})\bcov{_ k}V
      \big]
      d\sigma_i
      \,.
\end{multline}
\redad{2VII19}{
The limit as $R\to\infty$ of the integrand of the mass,  whenever it exists, will be referred to as the \emph{mass aspect}.
}

\section{Positivity for selected classes  of metrics}\label{s19VI19.1}

\redad{27VIII19}
{Throughout this section, for convenience we rescale  the metric by a constant factor so that $\ell=1$.}
\subsection{Positivity for a class of  $U(1)^{n-1}$ invariant metrics}
 \label{ss19VI19.12}

Consider a metric on
$$
M = \R^2\times \T^{n-2}:=\R^2 \times \underbrace{S^1 \times \cdots \times S^1}_{n-2 \ \mathrm{factors}}
$$
which is invariant under rotations of the $\R^2$ factor as well as rotations of each of the factors $S^1$ of the torus $\T^{n-2}$. In coordinates adapted to the symmetry it can be written in the form
  \begin{equation}
    \label{19VI19.1s}
    g = g_{ij}dx^i dx^j
    \,,
  \end{equation}
where all the $g_{ij}$'s depend only upon
\redad{2VII19}{
the polar radial coordinate on $\R^2$, which will be denoted by $r$.
}
A redefinition of the coordinates
$$
x^a \mapsto x^a +f^a(r)
$$
where $x^1=r$, $x^a = (\theta, x^A)$, with $a,b=2,3,\ldots,n$ and $A,B=3,\ldots,n$,
 with suitably chosen functions $f^a$ allows one to obtain $g_{ra}=0$, bringing the metric to the
form
  \begin{align}
    g
    	&= e^{2u}dr^{2} + e^{2v}d\theta^{2} +
    	2 g_{\theta A} d\theta dx^{A} + g_{AB}dx^{A}dx^{B}
\nonumber    	
\\
    	&\equiv e^{2u}dr^{2} + g_{ab}dx^{a}dx^{b}\,,
   \label{eq:4}
  \end{align}
where $u,v,g_{AB},g_{\theta A}$ are functions of $r$ only. A calculation gives
  \begin{align}
    R &=
    	-e^{-2u}\left( \hat W^{2} + 2\partial_{r}\hat W
      - 2 \hat W\partial_{r}u \right)
\nonumber
\\
     &\quad -  \frac{1}{4}e^{-2u}g^{ab}g^{cd}\partial_{r}g_{ac}\partial_{r}g_{bd}\,,
  \label{eq:6}
  \end{align}
  where
 %
  \begin{equation}
    \label{eq:5}
    \hat W := \frac 12  g^{ab}\partial_{r}g_{ab}
    \,.
  \end{equation}

We will specialise to the case where the orbits of the $U(1)^{n-2}$-isometry subgroup acting on the torus factor of $M$ are orthogonal to the $\R^2$ factor;  this is sometimes referred to as
\emph{orthogonal-transitivity}
and is, at least locally, equivalent to the condition that each of the covector fields, say $X_{(A)}^\flat$,
 $A= 3,\ldots, n$, associated with the Killing vectors  $X_{(A)}$ generating the $U(1)^{n-2}$ action on the torus factor of $M$  satisfies
\begin{equation}\label{19VI19.23}
  dX_{(A)}^\flat \wedge X_{(3)}^\flat\wedge \cdots \wedge X_{(n)}^\flat = 0
  \,.
\end{equation}
\redad{23VI19}{
In other words, the metric is $U(1)^{n-1}$-invariant with an orthogonally-transitive $U(1)^{n-2}$ subgroup.
}
In this case there exist coordinates in which the metric takes the form
\begin{equation}
    \label{19VI19.22}
    g = e^{2u}dr^{2} + e^{2v}d\theta^{2} +
    g_{AB}dx^{A}dx^{B}\,,
  \end{equation}
  where $u$, $v$ and the $g_{AB}$'s  still depend only upon $r$. We then have
   \begin{align}
    R &= -2e^{-2u}\Big(\partial_{r}^{2}v - \partial_{r}u\partial_{r}v
\nonumber
     \\
    & \quad +(\partial_{r}v)^{2} + W \partial_{r}(v-u) +
     \frac 1 2 W^{2} + \partial_{r}W \Big)
\nonumber
     \\
    & \quad- \frac{1}{4} e^{-2u} g^{AB}g^{CD}\partial_{r}g_{AC}\partial_{r}g_{BD}\,,
    \label{19VI19.15}
  \end{align}
  with
  \begin{equation}
    \label{20VI19.1}
    W := \frac{1}{2}g^{AB}\partial_{r}g_{AB}\,.
  \end{equation}

We wish to show that metrics in this class have positive mass
\redad{2VII19}{
with respect to their asymptotic HM background (with $V=r/\ell$ in the coordinates of \eqref{24II18.1} below),
}
whenever the scalar curvature satisfies
\begin{equation}\label{14VI19.5}
  R \ge -n(n-1)
  \,.
\end{equation}
This condition is equivalent to the hypothesis of positivity of energy density for time-symmetric general relativistic initial data sets with negative cosmological constant $\Lambda = -n(n-1)/2$.

Indeed, we claim:
\begin{theorem}
 \label{T19VI19.1}
Consider a metric $g$ on $\R^2 \times \T^{n-2}$ of the form   \eqref{19VI19.22} where all the metric functions depend only upon $r$ and which has a well defined total mass $m$ with respect to a  Horowitz-Myers metric. If the Ricci scalar $R$ of $g$ satisfies \eqref{14VI19.5} then
$$
 m\ge 0
 \,,
$$
vanishing if and only if $g$ coincides with its asymptotic Horowitz-Myers metric.
\end{theorem}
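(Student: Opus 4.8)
The plan is to turn the boundary integral \eqref{2V18.11} into a radial integral of a sign-definite quantity, using that the scalar-curvature formula \eqref{19VI19.15} can be written as a Raychaudhuri-type equation for the constant-$r$ tori $\Sigma_r$. Setting $P := \partial_r v + W$, so that $e^{-u}P$ is the mean curvature of $\Sigma_r$ with respect to the unit normal $e^{-u}\partial_r$, a rearrangement of \eqref{19VI19.15} gives
\[
 -\tfrac12 e^{2u}R \;=\; \partial_r P - \partial_r u\,P \;+\; \tfrac12 P^2 + \tfrac12(\partial_r v)^2 + \tfrac18\,g^{AB}g^{CD}\partial_r g_{AC}\partial_r g_{BD}\,.
\]
Here the last three terms are manifestly non-negative (and, after splitting $\partial_r g_{AB}$ into its trace and trace-free parts, they assemble into a positive-definite quadratic form in $(\partial_r v, W)$ plus the shear of $\Sigma_r$, which vanishes for the round, isotropically-scaled Horowitz-Myers fibres), while $\partial_r P - \partial_r u\,P = e^{u}\,\partial_r(e^{-u}P)$ is a total $r$-derivative of the mean curvature up to the factor $e^{u}$.

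First I would make \eqref{2V18.11} explicit for the ansatz \eqref{19VI19.22}: since all metric functions depend on $r$ only, integration over the torus yields a fixed coordinate volume and $d\sigma_i$ survives only for $i=r$, so $m=\lim_{R\to\infty}\mathcal F(R)$ with $\mathcal F(r)$ an explicit functional of $u,v,g_{AB}$ and their first derivatives, compared against the asymptotic Horowitz-Myers data (for which the static potential is $V=r$). I would then extend $\mathcal F$ to a quasi-local mass $\mathcal M(r)$, defined for all $r$ down to the radius $r_0$ at which the $\R^2$ factor closes off, with $\lim_{r\to\infty}\mathcal M(r)=m$; structurally $\mathcal M(r)$ is the $V$-weighted, area-integrated mean curvature of $\Sigma_r$ measured against the background, i.e.\ it uses \emph{both} terms of \eqref{2V18.11}. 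The role of the second term, involving $\bcov_k V$, is crucial: multiplying the identity above by $e^{-u}$ produces, besides $\partial_r(e^{-u}P)$ and non-negative pieces, the AdS term $\tfrac{n(n-1)}2 e^{u}$ with the wrong sign, and it is precisely the static-potential weighting that converts this term into a further total derivative.

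The heart of the matter is then to compute $\partial_r\mathcal M$ and to show, using the displayed identity together with \eqref{14VI19.5}, that
\[
 \partial_r\mathcal M \;=\; (\text{positive weight})\cdot\big(R+n(n-1)\big)\;+\;(\text{non-negative remainder})\;\ge\;0\,,
\]
the remainder being built from the shear and from a square measuring the deviation of $(\partial_r v, W)$ from the background. Next I would evaluate $\mathcal M(r_0)$: smoothness of $g$ at the centre — absence of a conical singularity in the $(r,\theta)$ plane, the correct vanishing order of $e^{2v}$, and smoothness and non-degeneracy of $g_{AB}$ — forces the inner contribution to vanish, $\mathcal M(r_0)=0$. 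Integrating, $m=\mathcal M(r_0)+\int_{r_0}^{\infty}\partial_r\mathcal M\,dr\ge0$. For rigidity, $m=0$ forces $\partial_r\mathcal M\equiv0$, hence $R\equiv-n(n-1)$ together with vanishing shear and vanishing deviation square; the former makes the transverse geometry rigid and, with the matched asymptotics and the regularity at $r_0$, ODE uniqueness identifies $g$ with the Horowitz-Myers metric.

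I expect the main obstacle to be the construction of $\mathcal M(r)$ and the verification that $\partial_r\mathcal M$ has the claimed sign-definite form. The delicate points are: selecting, at each finite $r$, the correct Horowitz-Myers comparison (the one fixed by the asymptotics of $g$ and compatible with $r_0$) so that the outer boundary term is exactly $m$ rather than $m$ plus a spurious constant; arranging the $V$-weighting so that the AdS term $\tfrac{n(n-1)}2 e^{u}$ is absorbed into the derivative of $\mathcal M$; and completing squares so that the anisotropy of $g_{AB}$ — the cross-terms among $\partial_r v$, $W$ and the off-diagonal $g_{AB}$ — collapses onto the non-negative shear rather than an indefinite remainder. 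Checking that the centre regularity conditions annihilate $\mathcal M(r_0)$ exactly, which rigidity requires, is the remaining point needing care.
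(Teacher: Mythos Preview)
Your strategy—a monotone quasi-local mass $\mathcal M(r)$ with $\mathcal M(r_0)=0$—is natural but diverges from the paper's route, and the paper's computation indicates a genuine obstruction to it.

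The paper does not establish pointwise monotonicity of any such functional. Instead it first changes the radial coordinate $r\mapsto\tilde r$ (Lemma~\ref{Lem:07VI19-L1}) so that $g_{\tilde r\tilde r}=e^{2u_{\HM,\tilde r_0}}$ \emph{exactly}, for a new parameter $\tilde r_0$ that generically differs from the asymptotic $r_0$. The mass aspect is then written as a radial integral against a weight $\eta=\tilde r^{-(n+1)}e^{2v_{\HM,\tilde r_0}+(n-1)w_{\HM,\tilde r_0}+\hat{\tilde v}}$ that depends on the unknown $\hat{\tilde v}$ itself. Even after these choices, the integrand still contains the sign-indefinite term $(v_{\HM,\tilde r_0}'-w_{\HM,\tilde r_0}')\,\hat{\tilde v}'$. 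That term is not bounded pointwise; it is integrated exactly to the boundary value $\tilde r_0^{\,n-1}(\tilde r_0-r_0)$, using regularity at the centre in the form $e^{\hat{\tilde v}(\tilde r_0)}=r_0/\tilde r_0$ (the $\theta$-period is fixed by the asymptotic $r_0$, while the metric closes at $\tilde r_0$). Combined with a further constant $-\tfrac1n(\tilde r_0^{\,n}-r_0^{\,n})$ produced by the coordinate change, the inner contribution is $-\tfrac1n(\tilde r_0^{\,n}-r_0^{\,n})+\tilde r_0^{\,n-1}(\tilde r_0-r_0)$, non-negative by convexity of $t\mapsto t^n$ and vanishing only when $\tilde r_0=r_0$.

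So the obstacle you flag in your last paragraph is real and more serious than a bookkeeping issue: the would-be $\partial_r\mathcal M$ carries an indefinite piece that cannot be completed to a square, and the inner boundary term is generically nonzero rather than killed by regularity. What your outline is missing is the normalisation of $u$ (or an equivalent device) that trades the unknown $\hat u$ for the single scalar $\tilde r_0$, together with the convexity inequality comparing $r_0$ with $\tilde r_0$; these, rather than monotonicity, are what close the argument.
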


The reader is referred to \cite{ChHerzlich} for the detailed asymptotic conditions needed for a well-defined mass.

\medskip

{\noindent \sc Proof:}
It turns out that the proof is most transparent for   metrics of the form
\begin{equation}
g = e^{2u}dr^2 + e^{2v} d\theta^2 + e^{2w}\big((dx^3)^2+\cdots+(dx^{n})^2
 \big)
  \,,
  	\label{Eq:07VI19-Met}
\end{equation}
where $u$, $v$ and $w$ are functions of $r$. We will therefore first carry out the proof in this case. For the metric \eqref{Eq:07VI19-Met} we have
 \ptcheck{28 III 18 with Luc, and the formula for R rechecked by MH in dim 3 with mathematica, and by MM in dim 3, 4 and 5 with mathematica in VI19}
%
\begin{align}
R
	&= 2e^{-2u}\Big[ - v'' - (n-2)w'' + u'v' + (n-2)u'w'
	 \nonumber\\
		&\qquad\qquad - (n-2) v'w' - (v')^2 - \frac{(n-1)(n-2)}{2} (w')^2\Big]
	\;.\label{Eq:07VI19-Curv}
\end{align}
%

%
%

Suppose that $g$ asymptotes to a Horowitz-Myers metric \eqref{24II18.1} with parameter $r_0$.
We will denote  this asymptotic background as $g_{\HM,r_0}$,
\redad{18VIII19}{
the associated function $u$ as in \eqref{Eq:07VI19-Met} by $u_{\HM,r_0}$, etc.} We have
\begin{align*}
u_{\HM,r_0}
	&= -\ln r - \frac{1}{2}\ln(1 - \frac{r_0^n}{r^n})
	\;,\\
v_{\HM,r_0}
	&= \ln r + \frac{1}{2}\ln(1 - \frac{r_0^n}{r^n})
	\;,\\
w_{\HM,r_0}
	&= \ln r
	\;.
\end{align*}
It is readily seen that $R(g_{\HM,r_0}) = -n(n-1)$.
In order to obtain a smooth metric  at $r = r_0$, $\theta$ needs to have period ${4\pi}/{nr_0}$.
 \ptcheck{29 III}

Using e.g.\ the perturbation arguments in~\cite{ChDelayAH,CGNP}, in order to prove positivity it suffices to assume that the components of the metric, when expressed in terms of orthonormal  (ON) frame of the  asymptotic background, behave as  $ r^{-n}$ plus $o(r^{-n})$ terms, and that this behaviour is preserved under differentiation.
We can therefore, without loss of generality, assume the asymptotic expansions, for large $r$,
\begin{align}
 u
	&= u_{\HM,r_0} + \hatu = u_{\HM,r_0} + u_n \,r^{-n} + o(r^{-n})
	\;,
\nonumber
\\
 v
	&= v_{\HM,r_0} + \hatv = v_{\HM,r_0} + v_n\,r^{-n}  + o(r^{-n})
	\;,
\nonumber
\\
 w
	&= w_{\HM ,r_0} + \hatw = w_{\HM ,r_0} + w_n\, r^{-n} + o(r^{-n})
	\;,
\label{18VI19.1}
\end{align}
 where $u_n$, $v_n$ and $w_n$ are constants. 
 In order to determine  the mass aspect, rather than calculating the integrand of \eqref{2V18.11} one can proceed as follows:  We calculate
\begin{widetext}
   \ptcheck{29 III, power corrected}
%
\begin{align}
R
	&= 2e^{-2u_{\HM,r_0}}(1 - 2\hatu + O(r^{-2n})) \times
 \nonumber
\\
		&\qquad \times \Big[-\frac{n(n-1)}{2} e^{2u_{\HM,r_0}}  - (\hatv'' + (n-2)\hatw'')
            + [u_{\HM,r_0}'(\hatv' + (n-2)\hatw')
						+ \hatu'(v_{\HM,r_0}' + (n-2)w_{\HM,r_0}')]
 \nonumber
\\
			&\qquad\qquad - (n-2) (v_{\HM,r_0}' \hatw'+ \hatv'\,w_{\HM,r_0}')
			- 2v_{\HM,r_0}' \hatv' - (n-1)(n-2) w_{\HM,r_0}' \hatw' + O(r^{-2(n+1)})\Big]
 \nonumber
\\
	&= - n(n-1) + 2n(n-1)\hatu + 2r^2\Big[  - (\hatv'' + (n-2)\hatw'')
                +  \frac{n-1}{r} \hatu'
				-   \frac{n+1}{r} \hatv'
				- \frac{(n+1)(n-2)}{r}\hatw'\Big] + O(r^{-2n})
 \nonumber
\\
	&= - n(n-1) + 2r^{\red{1-n}}\frac{d}{dr}\Big[ (n-1) r^{n}\hatu -  r^{n+1} \hatv' - (n-2)r^{n+1}\hatw'\Big] + O(r^{-2n})
		\;.\label{5VI19.2}
\end{align}
\end{widetext}
\redad{3VII19}{
It follows from this last equation and from the way that the mass is calculated in \cite{ChHerzlich}
that the mass aspect function of $g$ relative to $g_{\HM }$  is the leading term of the expression inside the square bracket.
}
Hence, up to a positive multiplicative constant,
%
\begin{equation}\label{5VI19.21}
\Theta(g) = \frac{2(n-1)}{n}u_n + 2v_n + 2(n-2)w_n
	\;.
\end{equation}
The next step of our analysis consists of redefining the coordinate $r$ to a new coordinate 
$\tilde r$ so that the function $\hat u$ in the new coordinate system
vanishes.
In other words, the function $u$ in the new coordinate system will be equal  $u_{\HM,\tilde r_0}$ for some $\tilde r_0$: 

\begin{lem}\label{Lem:07VI19-L1}
There exists a smooth increasing function $r \mapsto \tilde r(r)$ such that
\begin{align}
\tilde r(r) &= r  -\frac{r_0^n-\tilde r_0^n + 2 u_n}{2 n r^{n-1}} +
o(r^{1-n}) \text{ as }r \rightarrow \infty
	\;,\label{Eq:07VI19-C1}
	\\
g_{\tilde r \tilde r}
	&= e^{2 u_{\HM ,\tilde r_0}} \text{ with } \tilde r_0 = \tilde r(r_0) > 0
	\;.\label{Eq:07VI19-C2}
\end{align}
\end{lem}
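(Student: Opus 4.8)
The plan is to realise the required change of radial coordinate as a matching of proper radial distances, which reduces \eqref{Eq:07VI19-C2} to a separable first-order relation containing a single undetermined constant, namely $\tilde r_0$, and then to fix that constant by the center-matching condition built into \eqref{Eq:07VI19-C2}.

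First I would reformulate \eqref{Eq:07VI19-C2}. Under an increasing change $r\mapsto\tilde r(r)$ the radial component of $g$ becomes $g_{\tilde r\tilde r}=e^{2u}(dr/d\tilde r)^2$, so demanding $g_{\tilde r\tilde r}=e^{2u_{\HM,\tilde r_0}(\tilde r)}$ is equivalent to the separated identity $e^{u(r)}\,dr=e^{u_{\HM,\tilde r_0}(\tilde r)}\,d\tilde r$; in words, $\tilde r$ must identify the radial arc-length of $g$ with that of $g_{\HM,\tilde r_0}$. Since $e^{u_{\HM,\tilde r_0}(\tilde r)}=\big(\tilde r\sqrt{1-\tilde r_0^n/\tilde r^n}\big)^{-1}$ decays like $1/\tilde r$ at infinity and blows up only like $(\tilde r-\tilde r_0)^{-1/2}$ at the center $\tilde r_0$, the renormalised distance $\Psi_{\tilde r_0}(\tilde r):=\ln\tilde r-\int_{\tilde r}^{\infty}\big(e^{u_{\HM,\tilde r_0}(s)}-1/s\big)\,ds$ is a smooth increasing function with $\Psi_{\tilde r_0}'=e^{u_{\HM,\tilde r_0}}$, a finite limit $P(\tilde r_0)$ as $\tilde r\downarrow\tilde r_0$, and the expansion $\Psi_{\tilde r_0}(\tilde r)=\ln\tilde r-\tilde r_0^n/(2n\tilde r^n)+o(\tilde r^{-n})$. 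The scaling $s\mapsto\tilde r_0 s$ gives $P(\tilde r_0)=\ln\tilde r_0-C$ with $C=\int_{1}^{\infty}\big(\sigma^{-1}(1-\sigma^{-n})^{-1/2}-\sigma^{-1}\big)\,d\sigma$ a universal constant. The same construction applied to $g$ produces $\Phi(r):=\ln r-\int_{r}^{\infty}\big(e^{u(s)}-1/s\big)\,ds$; writing $e^{u}=e^{u_{\HM,r_0}}(1+\hat u+\ldots)$ and inserting \eqref{18VI19.1} shows that the tail integral converges and that $\Phi(r)=\ln r-(r_0^n+2u_n)/(2n r^n)+o(r^{-n})$.

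With these two primitives in hand I would define $\tilde r(r):=\Psi_{\tilde r_0}^{-1}\big(\Phi(r)\big)$. By construction $d\tilde r/dr=e^{u(r)}/e^{u_{\HM,\tilde r_0}(\tilde r)}>0$, so $\tilde r$ is smooth and increasing and satisfies $g_{\tilde r\tilde r}=e^{2u_{\HM,\tilde r_0}(\tilde r)}$ for every value of $\tilde r_0$. The remaining requirement $\tilde r_0=\tilde r(r_0)$ in \eqref{Eq:07VI19-C2} forces the two centers to correspond, i.e. $\Phi(r_0)=\Psi_{\tilde r_0}(\tilde r_0)=P(\tilde r_0)=\ln\tilde r_0-C$, an equation with the unique positive solution $\tilde r_0=\exp\big(\Phi(r_0)+C\big)$; this is what pins down the background parameter. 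Finally, writing $\tilde r=r+\delta(r)$ and equating $\Phi(r)=\Psi_{\tilde r_0}(\tilde r)$, the two expansions above yield $\delta/r=(\tilde r_0^n-r_0^n-2u_n)/(2n r^n)+o(r^{-n})$, which is exactly \eqref{Eq:07VI19-C1}.

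The step I expect to be the main obstacle is establishing that $\Phi$, and hence the matching, makes sense all the way down to the center: this requires that the radial distance of $g$ be integrable through the axis region and that the expansion \eqref{18VI19.1} be differentiable, so that $e^{u}-1/s$ and its primitive are genuinely controlled to order $s^{-n}$. Bookkeeping the quadratically small $O(\hat u^2)$ terms in $e^{\hat u}$ is routine since they enter at order $r^{-2n}=o(r^{-n})$; the substantive content is the reduction to the single scalar equation $P(\tilde r_0)=\Phi(r_0)$ together with the strict monotonicity of $P(\tilde r_0)=\ln\tilde r_0-C$, which simultaneously yields existence and uniqueness of the matched Horowitz-Myers parameter $\tilde r_0$.
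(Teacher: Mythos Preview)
Your proposal is correct and follows essentially the same approach as the paper: both reduce \eqref{Eq:07VI19-C2} to the arc-length matching $e^{u(r)}\,dr=e^{u_{\HM,\tilde r_0}(\tilde r)}\,d\tilde r$, build primitives of the two integrands, and use the scaling $e^{u_{\HM,\tilde r_0}(\tilde r_0 s)}\,\tilde r_0\,ds=e^{u_{\HM,1}(s)}\,ds$ to see that the dependence of the construction on $\tilde r_0$ is an additive $\ln\tilde r_0$, from which $\tilde r_0$ and then the expansion \eqref{Eq:07VI19-C1} follow.

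The only difference is presentational. The paper integrates from the center, setting $F(r)=\int_1^r e^{u_{\HM,1}}$ and defining $\tilde r$ implicitly by $F(\tilde r/\tilde r_0)=\int_{r_0}^r e^{u}$; then $\tilde r(r_0)=\tilde r_0$ is automatic (since $F(1)=0$), and $\tilde r_0$ is fixed by matching the $\ln r$-subtracted limits at infinity. You instead renormalise at infinity, defining $\Psi_{\tilde r_0}$ and $\Phi$ with the $\ln$ already subtracted, and fix $\tilde r_0$ by matching at the center. A short computation shows your condition $\Phi(r_0)=\ln\tilde r_0-C$ is identical to the paper's defining equation for $\tilde r_0$ (with $C=F_n$), so the two constructions yield the same map. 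Your version has the mild advantage of making the monotonicity in $\tilde r_0$, and hence uniqueness, explicit; the paper's version avoids having to verify that the tail integral defining $\Phi$ converges at the center.
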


\begin{proof}
Define
\[
F(r) = \int_1^r e^{u_{\HM ,1}(\xi)}\,d\xi
	\;.
\]
Then there exists a constant $F_n$ depending upon the space dimension $n$ such that
\[
F(r) = \ln r + F_n
- \frac{1 }{2nr^n}
 +o(r^{-n}) \text{ as } r \rightarrow \infty
	\;.
\]
Let $\tilde r_0 > 0$ be such that
\[
\lim_{r \rightarrow \infty} \Big[\int_{r_0}^r e^{u(\xi)}\,d\xi - \ln r\Big] = F_n - \ln \tilde r_0
	\;.
\]
The desired function $\tilde r$ is then defined by
\begin{equation}
 F\Big(\frac{\tilde r(r)}{\tilde r_0}\Big) = \int_{r_0}^r e^{u(\xi)}\,d\xi
 	\;.\label{Eq:07VI19-E1}
\end{equation}

We proceed to check \eqref{Eq:07VI19-C1}-\eqref{Eq:07VI19-C2}. Indeed, we have by the definition of $\tilde r_0$ that
\[
\int_{r_0}^r e^{u(\xi)}\,d\xi =  \ln r + (F_n - \ln \tilde r_0)
- \frac{r_0^n + 2 u_n}{2nr^n}
 +o(r^{-n}) \text{ as } r \rightarrow \infty
	\;.
\]
Using \eqref{Eq:07VI19-E1} and

\[
F\Big(\frac{\tilde r(r)}{\tilde r_0}\Big) =  \ln \tilde r + (F_n - \ln \tilde r_0) - \frac{\tilde r^n_0 }{2n\tilde r^n}
 +o(\tilde r^{-n}) \text{ as } \tilde r \rightarrow \infty
\]
yields \eqref{Eq:07VI19-C1}. On the other hand, from \eqref{Eq:07VI19-E1}, we have
\[
\int_{r_0}^r e^{u(\xi)}\,d\xi
	=  \int_1^{\frac{\tilde r(r)}{\tilde r_0}} e^{u_{\HM ,1}(\xi)}\,d\xi
	=\int_{\tilde r_0}^{\tilde r(r)} e^{u_{\HM ,\tilde r_0}(\xi)}\,d\xi
		\;.
\]
Differentiating in $r$ yields
\[
e^{u(r)}\,dr = e^{u_{\HM ,\tilde r_0}(\tilde r)}\,d\tilde r,
\]
which gives \eqref{Eq:07VI19-C2}.
\qedskip
\end{proof}

Using the variable $\tilde r$ given in Lemma \ref{Lem:07VI19-L1}, we rewrite the metric \eqref{Eq:07VI19-Met} as
\begin{equation}
g = e^{2\tilde u}d \tilde{r}^2 + e^{2\tilde v} d\theta^2 + e^{2\tilde w}\big(( {dx^3)^2+\cdots+(dx^{n}})^2
 \big)
  \,,
  	\label{Eq:07VI19-Mettil}
\end{equation}
where $\tilde u$, $\tilde v$ and $\tilde w$ are functions of $\tilde r \in [\tilde r_0, \infty)$, keeping in mind that $\theta$ is an angular variable with period $\frac{4\pi}{nr_0}$.

Write
\begin{align*}
\tilde v
	&= v_{\HM,\tilde r_0} + \hattilv = v_{\HM,\tilde r_0} + \tilde v_n\,\tilde r^{-n}  + o(\tilde r^{-n})
	\;,\\
\tilde w
	&= w_{\HM ,\tilde r_0} + \hattilw = w_{\HM ,\tilde r_0} + \tilde w_n\, \tilde r^{-n} + o(\tilde r^{-n})
	\;,
\end{align*}
 where $\tilde v_n$ and $\tilde w_n$ are constants,  and note that   the function $\hattilu:=\tilde u - u_{\HM,\tilde r_0} $ is identically zero by construction. It is readily seen that
   \ptcheck{8VI19}
 %
 \begin{align*}
\tilde v_n
	&= v_n +  \frac{1}{2}(\tilde r_0^n - r_0^n) {+ \frac{r_0^n-\tilde r_0^n + 2 u_n}{2 n }}
	\;,\\
\tilde w_n
	&=  w_n  { + \frac{r_0^n-\tilde r_0^n + 2 u_n}{2 n }}
	\;.
\end{align*}
Hence, by \eqref{5VI19.21},
\ptcheck{8VI19}
%
\begin{equation}\label{Eq:07VI19-M1}
\Theta(g) = -\frac{1}{n} (\tilde r_0^n - r_0^n)  + 2\tilde v_n + 2(n-2)\tilde w_n
	\;.
\end{equation}
This implies
\begin{align}
\Theta(g)
	&= -\frac{1}{n} (\tilde r_0^n - r_0^n)
		+ \lim_{\tilde r \rightarrow \infty} 2\tilde r^n(\hattilv + (n-2)\hattilw)
		\nonumber
		\\
	&= -\frac{1}{n} (\tilde r_0^n - r_0^n)
		-  \lim_{\tilde r \rightarrow \infty} \frac{2}{n}\tilde r^{n+1} (\hattilv + (n-2)\hattilw)'
		\nonumber
		\\
	&= -\frac{1}{n} (\tilde r_0^n - r_0^n)
		+ \frac{2}{n} \int_{\tilde r_0}^\infty  \Big[(-  {\hattilv}
  - (n-2)  {\hattilw})' \Wpsi {\tilde r^{n+1}}\Big]'\,d\tilde r
	\label{Eq:07VI19-M2}
	\;,
\end{align}
where $'$ now stands for $\frac{d}{d\tilde r}$ and $\Wpsi$ is any function of $\tilde r$ which vanishes at $\tilde r = \tilde r_0$ and $\Wpsi \rightarrow 1$ as $\tilde r \rightarrow \infty$.

Recall formula \eqref{Eq:07VI19-Curv}, which in the current coordinate system translates to
\begin{widetext}
\begin{align}
R
	&= 2e^{-2\tilde u}\Big[ - \tilde v'' - (n-2)\tilde w'' + \tilde u'\tilde v' + (n-2)\tilde u'\tilde w'
 - (n-2) \tilde v' \tilde w' - (\tilde v')^2 - \frac{(n-1)(n-2)}{2} (\tilde w')^2\Big]
	\;.\label{Eq:07VI19-Curvtil}
\end{align}
Using the fact that $g_{\HM ,\tilde r_0}$ has curvature $-n(n-1)$, we thus have
\ptcheck{9VI19 up to here}
%
\begin{align}
 \nonumber
R + n(n-1)
	&= 2e^{-2u_{\HM,\tilde r_0}}\Big[ - \hattilv'' - (n-2)\hattilw''
			+ (u_{\HM,\tilde r_0}' - 2 v_{\HM,\tilde r_0}' - (n-2)w_{\HM,\tilde r_0}')\hattilv'
		\\
		&\quad  + (n-2)(u_{\HM,\tilde r_0}' -  v_{\HM,\tilde r_0}' - (n-1) w_{\HM,\tilde r_0}')\hattilw'
			- (n-2)\hattilv' \hattilw' - (\hattilv')^2
			- \frac{(n-1)(n-2)}{2} (\hattilw')^2\Big]
	\;.
\label{14VI19.21}
\end{align}
Using $\Wpsi = \tilde r^{-(n+1)}e^{2v_{\HM,\tilde r_0} + (n-1)w_{\HM,\tilde r_0} + \hattilv}$ in \eqref{Eq:07VI19-M2} and noting that $u_{\HM,\tilde r_0} = -v_{\HM,\tilde r_0}$, we arrive at
\ptcheck{10VI19}
%
\begin{align}
 \nonumber
\Theta(g)
	&= -\frac{1}{n} (\tilde r_0^n - r_0^n)
		+ \frac{2}{n} \int_{\tilde r_0}^\infty  \Big[ (- \hattilv - (n-2)\hattilw)' e^{2v_{\HM,\tilde r_0} + (n-1)w_{\HM,\tilde r_0} + \hattilv }\Big]'\,d\tilde r
\\
 \nonumber
	&= 	-\frac{1}{n} (\tilde r_0^n - r_0^n)
		+  \frac{2}{n} \int_{\tilde r_0}^\infty e^{2v_{\HM,\tilde r_0} + (n-1)w_{\HM,\tilde r_0} + \hattilv}\Big[- \hattilv'' - (n-2)\hattilw''
\\
 \nonumber
		&\qquad - (\hattilv' + (n-2)\hattilw')(2v_{\HM,\tilde r_0}' + (n-1)w_{\HM,\tilde r_0}' + \hattilv')\Big]\,d\tilde r
\\
 \nonumber
	&=	-\frac{1}{n} (\tilde r_0^n - r_0^n)
		+ \frac{2}{n} \int_{\tilde r_0}^\infty e^{2v_{\HM,\tilde r_0} + (n-1)w_{\HM, \tilde r_0} + \hattilv}\Big\{\frac{1}{2}e^{2u_{\HM,\tilde r_0}}(R + n(n-1))
	\\
		&\qquad + (v_{\HM,\tilde r_0}' - w_{\HM,\tilde r_0}')\hattilv'
			+ \frac{(n-1)(n-2)}{2} (\hattilw')^2
		 \Big\}\,d\tilde r
	\;.
\label{19VI19.18}
\end{align}
\end{widetext}
The term containing $\hattilv'$ can be computed as follows:
\ptcheck{10VI19}
%
\begin{align}
& \frac{2}{n} \int_{\tilde r_0}^\infty e^{2v_{\HM,\tilde r_0} + (n-1)w_{\HM,\tilde r_0} + \hattilv}(v_{\HM,\tilde r_0}' - w_{\HM,\tilde r_0}')\hattilv'\,d\tilde r
\nonumber
\\
	&= \tilde r_0^n \int_{\tilde r_0}^\infty  e^{\hattilv}\hattilv'\,d\tilde r = \tilde r_0^n e^{\hattilv}\Big|_{\tilde r = \tilde r_0}^{\tilde r = \infty} = \tilde r_0^n (1 - e^{\hattilv(\tilde r_0)})
	\;.
 \label{18VI19.11}
\end{align}
As $g$ is regular at $\tilde r = \tilde r_0$ and $\theta$ has period $\frac{4\pi}{nr_0}$, we have that
\ptcheck{10VI19, one could consider adding some details here}
%
\begin{equation}\label{20VI19.2}
e^{\hattilv(\tilde r_0)} = \frac{r_0}{\tilde r_0}
 \,.
\end{equation}
Altogether we obtain
%
\begin{align}
 \nonumber
\Theta(g)
	&=	-\frac{1}{n} (\tilde r_0^n - r_0^n) + \tilde r_0^{n-1}(\tilde r_0 - r_0)
	\nonumber
\\
		&\qquad
		+ \frac{2}{n} \int_{\tilde r_0}^\infty e^{2v_{\HM,\tilde r_0} + (n-1)w_{\HM, \tilde r_0} + \hattilv}
	\nonumber
\\
		&\qquad\qquad				
		\Big\{\frac{1}{2}e^{2u_{\HM,\tilde r_0}}(R + n(n-1))
	\nonumber
\\
		&\qquad\qquad
	     \quad + \frac{(n-1)(n-2)}{2} (\hattilw')^2
		 \Big\}\,d\tilde r
	\;.\label{Eq:07VI19-M3}
\end{align}

The quantity $-\frac{1}{n} (\tilde r_0^n - r_0^n) + \tilde r_0^{n-1}(\tilde r_0 - r_0)$ is non-negative due to the convexity of the function $t \mapsto t^n$, which establishes that $\Theta(g)$ is positive or vanishes.

The case $m=0$ implies $\Theta(g) =0$,  and \eqref{Eq:07VI19-M3} gives
$$
   \hattilw' \equiv 0 \equiv R + n(n-1)
   \,,
   \quad
   r_0 = \tilde r_0
   \,.
$$
We see that $ \hattilw\equiv 0 =\hattilw_n=\hattilv_n$ as well, and \eqref{14VI19.21}   gives
\begin{align} 0
	&= - \hattilv''
			+ (u_{\HM,\tilde r_0}' - 2 v_{\HM,\tilde r_0}' - (n-2)w_{\HM,\tilde r_0}'- \hattilv')\hattilv'
	\;,
\label{22VI19.4}
\end{align}
while from \eqref{20VI19.2} we obtain
\begin{equation}\label{20VI19.3}
e^{\hattilv(\tilde r_0)} = 1
 \,.
\end{equation}
The maximum principle shows that $\hattilv\equiv 0$, and
we have proved:

\begin{prop}
 \label{P14VI19.1}
If the metric $g$ in \eqref{Eq:07VI19-Met} satisfies $R \geq -n(n-1)$ then $g$ has non-negative mass, vanishing if and only if $g$ coincides with the corresponding Horowitz-Myers metric.
\qed
\end{prop}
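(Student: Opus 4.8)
The plan is to express the mass as the sum of a boundary contribution controlled by convexity and a bulk integral whose integrand is manifestly non-negative once \eqref{14VI19.5} holds. First I would read off the mass aspect from the asymptotics \eqref{18VI19.1}: inserting the expansions into the curvature formula \eqref{Eq:07VI19-Curv}, expanding about the background $g_{\HM,r_0}$, and matching against the mass functional of \cite{ChHerzlich} identifies $\Theta(g)$ as a fixed linear combination of the leading coefficients $u_n,v_n,w_n$. This is the raw quantity whose sign I must control, and at this stage it is a combination of subleading data with no evident positivity.

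The decisive preliminary step is to spend the radial gauge freedom $r\mapsto\tilde r(r)$ to normalise $g_{\tilde r\tilde r}$ to its exact Horowitz–Myers form $e^{2u_{\HM,\tilde r_0}}$ for a suitable new parameter $\tilde r_0$; this is the content of Lemma \ref{Lem:07VI19-L1}. With $\hattilu\equiv 0$ the radial coordinate is pinned to the background and the only remaining unknowns are $\hattilv$ and $\hattilw$. The price is that the original regularity period $4\pi/(nr_0)$ is now recorded as a mismatch between $r_0$ and $\tilde r_0$; tracking how $v_n,w_n$ transform under the change of variable produces a clean rewriting of the mass aspect in which this mismatch appears explicitly through a $-\tfrac1n(\tilde r_0^n-r_0^n)$ term.

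Next I would convert the mass aspect into a bulk integral. Writing the leading coefficient of $\hattilv+(n-2)\hattilw$ as a limit and then as the integral of a total derivative, I would insert an integrating factor $\Wpsi$ vanishing at $\tilde r_0$ and tending to $1$ at infinity. The key choice is $\Wpsi=\tilde r^{-(n+1)}e^{2v_{\HM,\tilde r_0}+(n-1)w_{\HM,\tilde r_0}+\hattilv}$: feeding in the gauge-fixed form of \eqref{Eq:07VI19-Curv} collapses the second-order integrand into $\tfrac12 e^{2u_{\HM,\tilde r_0}}(R+n(n-1))$, a perfect square $\tfrac{(n-1)(n-2)}{2}(\hattilw')^2$, and a single term linear in $\hattilv'$. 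That last term integrates to a boundary contribution evaluated using smoothness at $\tilde r=\tilde r_0$, namely $e^{\hattilv(\tilde r_0)}=r_0/\tilde r_0$. Collecting terms gives $\Theta(g)=\big[-\tfrac1n(\tilde r_0^n-r_0^n)+\tilde r_0^{n-1}(\tilde r_0-r_0)\big]+\tfrac2n\int_{\tilde r_0}^\infty(\cdots)\,d\tilde r$ with a non-negative integrand, and the bracket is non-negative by convexity of $t\mapsto t^n$. For rigidity, $m=0$ forces each piece to vanish: $\tilde r_0=r_0$, $R\equiv-n(n-1)$, and $\hattilw'\equiv 0$, hence $\hattilw\equiv 0$ by decay; the curvature identity then becomes a second-order ODE for $\hattilv$ with $e^{\hattilv(\tilde r_0)}=1$, and the maximum principle yields $\hattilv\equiv 0$, so $g$ coincides with $g_{\HM,r_0}$.

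I expect the main obstacle to be discovering the integrating factor $\Wpsi$: positivity hinges entirely on this particular weight reducing the second-order curvature expression to $\tfrac12 e^{2u}(R+n(n-1))$ plus a square plus an exact derivative, so that the sign of the mass is dictated solely by \eqref{14VI19.5} and by convexity. A secondary subtlety will be the careful bookkeeping of the two parameters $r_0$ and $\tilde r_0$ together with the regularity condition at the core, which is precisely what turns the leftover boundary term into a convex, sign-definite expression rather than an uncontrolled one.
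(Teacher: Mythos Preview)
Your proposal is correct and follows the paper's proof essentially step for step: the gauge-fixing via Lemma~\ref{Lem:07VI19-L1}, the specific integrating factor $\Wpsi=\tilde r^{-(n+1)}e^{2v_{\HM,\tilde r_0}+(n-1)w_{\HM,\tilde r_0}+\hattilv}$, the reduction of the integrand to $\tfrac12 e^{2u_{\HM,\tilde r_0}}(R+n(n-1))+\tfrac{(n-1)(n-2)}{2}(\hattilw')^2$ plus the $\hattilv'$ term, the evaluation of that term via $e^{\hattilv(\tilde r_0)}=r_0/\tilde r_0$, the convexity argument for the boundary piece, and the maximum-principle rigidity step all match the paper exactly.
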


We now pass to general $U(1)^{n-1}$-orthogonally-transitive-invariant metrics \eqref{19VI19.22}.  For this let us write
\begin{equation}\label{19VI19.14}
  \partial_r g_{AB} = \frac{2 W}{n-2} g_{AB} + H_{AB}
  \,,
\end{equation}
with $W$ as in \eqref{20VI19.1},
  thus $g^{AB} H_{AB}=0$. This allows us to rewrite the last term appearing in the formula \eqref{19VI19.15} for the Ricci scalar of $g$ as
%
\begin{widetext}
  \begin{equation}
    - \frac{1}{4} e^{-2u} g^{AB}g^{CD}\partial_{r}g_{AC}\partial_{r}g_{BD}
     =
      - \frac{1}{4} e^{-2u} \left( g^{AB}g^{CD}H_{AC}H_{BD} + \frac{4 W^2}{n-2}
     \right)
     \le -  \frac{  e^{-2u} W^2}{n-2}
      \,.
      \label{19VI19.26}
  \end{equation}
Inserting this into    \eqref{19VI19.15} one obtains
  \begin{equation}
    R
    \le   -2e^{-2u}\left(
    \partial_{r}^{2}v - \partial_{r}u\partial_{r}v+(\partial_{r}v)^{2}
     + W \partial_{r}(v-u) +
   \frac 1 2 W^{2} + \partial_{r}W
  \right)
   -  \frac{  e^{-2u} W^2}{n-2}
    \,.
      \label{19VI19.16}
  \end{equation}
\end{widetext}
Defining
\begin{equation}\label{19VI19.13}
  w':= \frac{W}{ n-2 }
  \,,
\end{equation}
the inequality~\eqref{19VI19.16} can be rewritten as
\begin{eqnarray}
 R \le  2 e^{-2u}\Big[ - v'' - (n-2)w'' + u'v' + (n-2)u'w'
	 \nonumber\\
		 - (n-2) v'w' - (v')^2    -\frac{(n-2)(n-1)}{2}(w')^2
 \Big]
 \,.
 \nonumber
 \\
  \label{7VI19.1+}
\end{eqnarray}
This coincides with   \eq{Eq:07VI19-Curvtil} except that the equality there is changed to an inequality consistent with what we need to prove. With the definition \eqref{19VI19.13} the formula \eqref{5VI19.21} (derived as  the mass aspect of the metric \eqref{Eq:07VI19-Met}) provides also the correct formula for the metric \eqref{19VI19.22}. The argument of the proof  of Proposition \ref{P14VI19.1} leads again to \eqref{19VI19.18} and \eqref{Eq:07VI19-M3} with the equalities there replaced by $\ge$, which establishes that $m\ge 0$.

If $m=0$ all the inequalities arising in the argument have to be equalities, in particular \eqref{19VI19.26} with $\le$ replaced by an equality implies that $\partial_r g_{AB}$ is pure trace, and  Proposition~\ref{P14VI19.1} applies. The proof of Theorem~\ref{T19VI19.1} is complete.
\qed 

\subsection{Positivity for a class of orthogonally-transitive-$U(1)^{n-2}$-invariant metrics}
 \label{ss19VI19.11}

It turns out that the arguments given so far partially generalise to metrics which are invariant under an orthogonally-transitive action of $U(1)^{n-2}$ by isometries of the torus factor of $M$. Such metrics can be written in the form
\begin{equation}
  \label{21VI19.1}
  g = e^{2u}dr^{2} + e^{2v}d\theta^{2} +
  g_{AB}dx^{A}dx^{B}\,,
\end{equation}
where $u,v,g_{AB}$ are functions of $(r,\theta)$. One finds
\begin{widetext}
\begin{multline}
  R = -2e^{-2u}\left(\partial_{r}^{2}v - \partial_{r}u\partial_{r}v+(\partial_{r}v)^{2} + \partial_{r}(v-u)W^{r} +
    \frac{1}{2}(W^{r})^{2} + \partial_{r}W^{r} \right) 
  - \frac{1}{4} e^{-2u} g^{AB}g^{CD}\partial_{r}g_{AC}\partial_{r}g_{BD}
  \\
  -2e^{-2v}\left(\partial_{\theta}^{2}u - \partial_{\theta}u\partial_{\theta}v+(\partial_{\theta}u)^{2} -\partial_{\theta}(v-u)W^{\theta} +
    \frac{1}{2}(W^{\theta})^{2} + \partial_{\theta}W^{\theta} \right) 
  - \frac{1}{4} e^{-2v} g^{AB}g^{CD}\partial_{\theta}g_{AC}\partial_{\theta}g_{BD}\,,
   \label{20VI19.9}
\end{multline}
\end{widetext}
with
\begin{equation}
  \label{eq:3}
  W^{r} = \frac{1}{2}g^{AB}\partial_{r}g_{AB},
  \quad
  W^{\theta} = \frac{1}{2}g^{AB}\partial_{\theta}g_{AB}\,.
\end{equation}

A useful device in the $U(1)^{n-1}$-symmetric case was the introduction of a new radial coordinate $\tilde r$ so that $g_{\tilde r \tilde r}$ takes a canonical form. This does not seem to go through in the general case above while preserving a form of the metric which is convenient for the remaining arguments.
On the other hand, the proof generalises if we assume at the outset that
\begin{equation}\label{20VI19.11}
 u \equiv u_{\HM}
 \,.
\end{equation}
We then have:

\begin{theorem}
 \label{T20VI19.1}
Consider a metric $g$ on $\R^2 \times \T^{n-2}$ of the form   \eqref{21VI19.1} where the metric functions depend only upon $(r,\theta)$ and which has a well defined total mass $m$ with respect to a  Horowitz-Myers background metric. If \eqref{20VI19.11} holds and if the Ricci scalar $R$ of $g$ satisfies $R \ge -n(n-1)$ then
$$
 m\ge 0
 \,,
$$
vanishing if and only if $g$ coincides with its asymptotic Horowitz-Myers metric.
\end{theorem}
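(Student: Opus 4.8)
The plan is to rerun the argument behind Proposition~\ref{P14VI19.1} with $\theta$ promoted to a second independent variable, exploiting the hypothesis \eqref{20VI19.11} in two essential ways. First, because $u \equiv u_{\HM,r_0}$ holds exactly rather than only asymptotically, the component $g_{rr}$ is already in the canonical form produced by Lemma~\ref{Lem:07VI19-L1}, so no change of radial coordinate is needed; consequently the convexity contribution that appeared in \eqref{Eq:07VI19-M3} does not arise (effectively $\tilde r_0 = r_0$), and the regularity relation \eqref{20VI19.2} becomes $e^{\hat v(r_0,\theta)} = 1$ for every $\theta$. Second, since $u_{\HM,r_0}$ depends only on $r$, every term carrying a factor $\partial_\theta u$ in the curvature formula \eqref{20VI19.9} drops out, which is exactly what makes the $\theta$-dependent computation tractable.

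Concretely, I would write $v = v_{\HM,r_0} + \hat v(r,\theta)$ and split $g_{AB}$ into its trace and trace-free parts as in \eqref{19VI19.14}, separately for $\partial_r g_{AB}$ and $\partial_\theta g_{AB}$, setting $w' := W^r/(n-2)$ as in \eqref{19VI19.13}. The two trace-free pieces enter \eqref{20VI19.9} only through the manifestly non-positive terms $-\tfrac14 e^{-2u}|\partial_r g|^2$ and $-\tfrac14 e^{-2v}|\partial_\theta g|^2$, so, exactly as in \eqref{19VI19.26}, discarding them yields an upper bound for $R$ in terms of $v$, $w'$ and $W^\theta$ alone, an inequality in the direction we need. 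Because $u=u_{\HM,r_0}$ is $\theta$-independent, the mass aspect at each fixed $\theta$ is still given by the one-dimensional formula \eqref{5VI19.21} with $u_n=0$, now with $\theta$-dependent coefficients $v_n(\theta)$ and $w_n(\theta)$, and the total mass is $m \propto \int \Theta(g)(\theta)\,d\theta$ over the $\theta$-circle, times the constant volume of the torus factor.

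Next I would carry out the integration by parts in $r$ precisely as in \eqref{Eq:07VI19-M2}--\eqref{19VI19.18}, pointwise in $\theta$, with the same weight $\Wpsi = \tilde r^{-(n+1)} e^{2 v_{\HM,r_0} + (n-1) w_{\HM,r_0} + \hat v}$; the boundary term \eqref{18VI19.11} now vanishes thanks to $e^{\hat v(r_0,\theta)}=1$. Integrating over the $\theta$-circle, the genuinely new ingredients are the $\theta$-derivative terms of \eqref{20VI19.9}. I expect these to be disposed of by integration by parts in $\theta$: the total-$\theta$-derivative pieces, such as the one built from $\partial_\theta W^\theta$, integrate to zero by periodicity, and the surviving quadratic terms, in particular $-\tfrac14 e^{-2v}|\partial_\theta g|^2$ and the $(W^\theta)^2$ contribution, should assemble, together with $\tfrac12 e^{2u_{\HM,r_0}}(R+n(n-1))$ and $\tfrac{(n-1)(n-2)}{2} (\hat w')^2$, into a pointwise non-negative integrand. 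This gives $m\ge 0$, and rigidity follows by tracing back the equalities: $m=0$ forces $R+n(n-1)\equiv 0$, the vanishing of both trace-free parts, $\hat w'\equiv 0$, and the vanishing of the $\theta$-contributions, whence $g$ reduces to the HM metric as in Proposition~\ref{P14VI19.1}.

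The main obstacle is this last step: showing that the full collection of $\theta$-derivative terms in \eqref{20VI19.9} is non-negative after integration over the $\theta$-circle against the weight $e^{-2v}\Wpsi$. The cross term $e^{-2v}\,\partial_\theta v\,W^\theta$ is the delicate one, since it is not sign-definite on its own; the hope is that, after integrating the $\partial_\theta W^\theta$ term by parts, it combines with $\tfrac12(W^\theta)^2$ and $-\tfrac14 e^{-2v}|\partial_\theta g|^2$ to leave a non-negative quadratic form in $\partial_\theta v$ and $\partial_\theta g_{AB}$. Verifying that this recombination produces the correct sign, and that the $\theta$-dependence of the weight introduces no obstruction, is the crux of the argument; everything else is a routine transcription of the $U(1)^{n-1}$ computation.
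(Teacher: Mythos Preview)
Your plan is essentially the paper's own proof, and all the structural observations you make are correct: with \eqref{20VI19.11} no radial coordinate change is needed, the regularity condition at the core geodesic becomes $\hat v(r_0,\theta)=0$, the $\partial_\theta u$ terms drop out of \eqref{20VI19.9}, the mass aspect is still \eqref{5VI19.21} pointwise in $\theta$, and the $r$-integration by parts with the weight $\Wpsi = r^{-(n+1)}e^{2v_{\HM}+(n-1)w_{\HM}+\hat v}$ goes through exactly as before.

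The one place where you hesitate is actually simpler than you expect. You phrase the resolution of the $\theta$-terms as a hope that, after integrating $\partial_\theta W^\theta$ by parts, the cross term $\partial_\theta v\,W^\theta$ will combine with the quadratic pieces into a non-negative form. In fact the mechanism is cleaner: after multiplying by the weight, the combination of the second-derivative term and the cross term is an \emph{exact} $\theta$-derivative,
\[
e^{-\hat v}\bigl(-\partial_\theta W^\theta + \hat v_{,\theta}\,W^\theta\bigr)
= \partial_\theta\bigl(-e^{-\hat v}W^\theta\bigr),
\]
which integrates to zero by periodicity (this is \eqref{20VI19.21} in the diagonal case and \eqref{20VI19.22} in general). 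What remains from the $\theta$-block of \eqref{20VI19.9} is then only $-\tfrac{1}{2}(W^\theta)^2$ and the quartic term $-\tfrac14 g^{AB}g^{CD}\partial_\theta g_{AC}\partial_\theta g_{BD}$, both of which have the right sign once you pass to the inequality via \eqref{19VI19.26}. So there is no delicate recombination to check; the cross term does not need to be dominated by anything, it simply disappears. With this observation in hand, your outline becomes a complete proof identical to the paper's.
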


\proof
As before, the proof is most transparent for metrics of the form
\begin{equation}
g = e^{2u}dr^2 + e^{2v} d\theta^2 + e^{2w}\big(({dx^3)^2+\cdots+(dx^{n}})^2
 \big)
  \,,
  	\label{20VI19.12}
\end{equation}
where we allow now $u$, $v$ and $w$ to depend both upon $r$ and $\theta$. Then
\ptcheck{18VI19; calculated by luc by hand and checked with MM's mathematica calculation in a few dimensions}
\begin{widetext}
\begin{align}
 \nonumber
R
	&= 2e^{-2u}\Big[ - v_{,rr} - (n-2)w_{,rr} + u_{,r} v_{,r} + (n-2)u_{,r}w_{,r}
		 - (n-2) v_{,r}w_{,r} - (v_{,r})^2 - \frac{(n-1)(n-2)}{2} (w_{,r})^2\Big]
	\\
		&\qquad + 2e^{-2v}\Big[ - u_{,\theta\theta} - (n-2) w_{,\theta\theta}
			+ u_{,\theta} v_{,\theta}  + (n-2) v_{,\theta} w_{,\theta}
			- (n-2) u_{,\theta} w_{,\theta}
			- (u_{,\theta})^2 - \frac{(n-1)(n-2)}{2} (w_{,\theta})^2
			\Big]
	\;.
 \label{20VI19.8}
\end{align}
If we assume \eqref{20VI19.11} and write
\[
v  = v_{\HM} + \hatv \text{ and } w = w_{\HM} + \hatw
\]
then, using \eqref{14VI19.21},
 \ptcheck{checked, crossref added and first lines commented out}
\begin{align*}
R + n(n-1)
	&= 2e^{-2u_\HM}\Big[ - \hatv_{,rr} - (n-2)\hatw_{,rr}
		+ (u_\HM' - 2 v_\HM' - (n-2)w_\HM')\hatv_{,r}
		\\
		&\qquad\qquad + (n-2)(u_\HM' -  v_\HM' - (n-1) w_\HM')\hatw_{,r}
		 \\
		&\qquad\qquad - (n-2)\hatv_{,r} \hatw_{,r} - (\hatv_{,r})^2
			- \frac{(n-1)(n-2)}{2} (\hatw_{,r})^2\Big]
			\\
		&\qquad \qquad + 2(n-2)e^{-2v_\HM - 2\hatv}\Big[  -  \hatw_{,\theta\theta}
			+  \hatv_{,\theta} \hatw_{,\theta}
			- \frac{n-1}{2} (\hat w_{,\theta})^2
			\Big]
	\;.
\end{align*}
\end{widetext}
Already-mentioned perturbation arguments allow us to assume that $g$ is asymptotic to $g_{\HM }$ in the sense of \eqref{18VI19.1}, where the expansion coefficients are now allowed to depend upon $\theta$. We have
 \ptcheck{18VI19}
\begin{widetext}
\begin{align*}
\Theta(g)
	&=\lim_{r \rightarrow \infty} 2r^n(\hatv + (n-2)\hatw)
     = -  \lim_{r \rightarrow \infty} \frac{2}{n}r^{n+1} \partial_r (\hatv + (n-2)\hatw)
	= \frac{2}{n} \int_{r_0}^\infty  \partial_r [\partial_r (- \hatv - (n-2)\hat w) \Wpsi r^{n+1}]\,dr
\,,
\end{align*}
\end{widetext}
where
 $\Wpsi$ is any function which vanishes at $r = r_0$ and $\Wpsi \rightarrow 1$ as $r \rightarrow \infty$. In the sequel we take
 $\Wpsi = {r^{-(n+1)}}e^{2v_{\HM} + (n-1)w_{\HM} + \hatv}$. Using $u_{\HM} = -v_{\HM}$ we find
 \ptcheck{18VI19}
%
\begin{align*}
\Theta(g)
	&= \frac{2}{n} \int_{r_0}^\infty  \partial_r [\partial_r (- \hatv - (n-2)\hat w) e^{2v_{\HM} + (n-1)w_{\HM} + \hatv }]\,dr\\
	&= \frac{2}{n} \int_{r_0}^\infty e^{2v_{\HM} + (n-1)w_{\HM} +  {\hat v}}\Big[- \hatv_{,rr} - (n-2)\hat w_{,rr}
	\\
		&\qquad - (\hatv_{,r} + (n-2)\hat w_{,r}) \times
	\\	
	   & \qquad \times (2v_{\HM}' + (n-1)w_{\HM}' + \hatv_{,r})\Big]\,dr\\
	&= \frac{2}{n} \int_{r_0}^\infty e^{2v_{\HM} + (n-1)w_{\HM} + \hatv}\Big\{\frac{1}{2}e^{2u_{\HM}}(R + n(n-1))
	\\
		&\qquad
{+}  ( v_{\HM}' - w_{\HM}')\hatv_{,r}
			+ \frac{(n-1)(n-2)}{2} (\hatw_{,r})^2
		\\
		&\qquad - (n-2)e^{-4v_\HM - 2\hatv} \times
\\
		&\qquad				
		\times \Big[  -  \hatw_{,\theta\theta}
			+  \hatv_{,\theta} \hatw_{,\theta}
			- \frac{n-1}{2} (\hat w_{,\theta})^2 \Big]\Big\}\,dr
 \,.
\end{align*}
Regularity of the metric at the core geodesic $r = r_0$ requires $\hatv = 0$ there.
As in \eqref{18VI19.11} we have now
 \ptcheck{18VI19}
\begin{align*}
& \frac{2}{n} \int_{r_0}^\infty e^{2v_{\HM} + (n-1)w_{\HM} + \hat v}(v_{\HM}' - w_{\HM}')\hatv_{,r}\,dr\\
	&= r_0^n \int_{r_0}^\infty  e^{\hatv}\hatv_{,r}\,dr = r_0^n e^{\hatv}\Big|_{r = r_0}^{r = \infty} = 0
	\;.
\end{align*}
The term containing $-  \hatw_{,\theta\theta} +  \hatv_{,\theta} \hatw_{,\theta}$ cancels away after integrating over $\theta$:
\begin{equation}\label{20VI19.21}
\int e^{-\hatv}(-\hatw_{,\theta\theta} + \hatv_{,\theta}\hatw_{,\theta})\,d\theta = \int \partial_\theta (-e^{-\hat v} \hatw_{,\theta})\,d\theta = 0
 \,.
\end{equation}
So, under the hypotheses of Theorem~\ref{T20VI19.1}, we obtain non-negativity of mass for metrics satisfying moreover \eqref{20VI19.12}. The vanishing of the mass implies $\hat w \equiv 0$ and one concludes by an argument similar to that in the proof of Proposition~\ref{P14VI19.1}.

We consider now the general case. Formula \eqref{20VI19.9} with $\hat u \equiv 0$ implies
\begin{widetext}
\begin{align}
  R &\le -2e^{-2u}\left(\partial_{r}^{2}v - \partial_{r}u\partial_{r}v+(\partial_{r}v)^{2} + \partial_{r}(v-u)W^{r}
    +
    \frac{(n-1)}{2(n-2)}(W^{r})^{2}
     + \partial_{r}W^{r} \right)
\nn  
  \\
  & \quad -2e^{-2v}\left(\partial_{\theta}^{2}u - \partial_{\theta}u\partial_{\theta}v+(\partial_{\theta}u)^{2} -\partial_{\theta}(v-u)W^{\theta} +
    \frac{(n-1)}{2(n-2)}(W^{\theta})^{2} + \partial_{\theta}W^{\theta} \right)
 \,,
   \label{20VI19.4}
\end{align}
where we have used  \eqref{19VI19.26}. Introducing
\begin{equation}\label{20VI19.5}
  w_{,r}= \frac{ W^r}{n-2}
\end{equation}
allows us to rewrite \eqref{20VI19.4} as
\begin{align}
 \nonumber
  R  	&\le 2e^{-2u}\Big[ - v_{,rr} - (n-2)w_{,rr} + u_{,r} v_{,r} + (n-2)u_{,r}w_{,r}
		- (n-2) v_{,r}w_{,r} - (v_{,r})^2 - \frac{(n-1)(n-2)}{2} (w_{,r})^2\Big]
  \\
  &
  \quad -2e^{-2v}\left(\partial_{\theta}^{2}u - \partial_{\theta}u\partial_{\theta}v+(\partial_{\theta}u)^{2} -\partial_{\theta}(v-u)W^{\theta} +
    \frac{(n-1)}{2(n-2)}(W^{\theta})^{2} + \partial_{\theta}W^{\theta} \right)
 \,.
   \label{20VI19.4+}
\end{align}
\end{widetext}
This differs from \eqref{20VI19.8} by the replacement of an equality sign by $\le$, and replacement of $w_{,\theta}$ by $W^\theta/(n-1)$. The rest of the proof requires only trivial changes, for example \eqref{20VI19.21} is replaced by
\begin{equation}\label{20VI19.22}
\int e^{-\hatv}(-W_{,\theta } + \hatv_{,\theta}W^{\theta})\,d\theta = \int \partial_\theta (-e^{-\hat v} W^{\theta})\,d\theta = 0
 \,.
\end{equation}
The details are left to the reader.
\qed
%


\section{Perturbations of static Riemannian metrics}

In the remainder of this work we wish to address the question of positivity of the relative mass for small perturbations of the Horowitz-Myers metrics. We start with some general considerations.

Given a metric $g$ asymptotic to a background metric $\backg$, we define
\begin{eqnarray}
&
 h_{ij}:= g_{ij} - \backg_{ij}
 \,,
 &
 \label{17VI19.6}
\\
 &
 \psi^j:=\bcov{_ i}  g^{ij}
 \quad
 \Longleftrightarrow
 \quad
 g^{ij} \bcov_i h_{j\ell} =-g_{\ell j}  \psi^j
 \,,
  \label{28II18.1}
  &
\\
 &
 \phi := g^{ij} h_{ij}
 \quad
 \Longrightarrow
 \quad
  \overline \phi := \backg^{ij} h_{ij} = \phi +  O\left(|h |^2_{\backg} \right)
 \,.
  &
  \notag
  \\
    \label{28II18.1+}
\end{eqnarray}
We will denote by $\check h$, respectively by $\hat h$, the $g$-trace-free, respectively the $\backg{}$-trace-free, part of $h$:
%
\begin{equation}\label{20II18.1}
 \check{h}
 :=
 h - \frac{1}{n}  {\phi} \, g
 \,,
 \qquad
 \hat{h}
 :=
 h - \frac{1}{n} \overline{\phi} \, \backg
  \,.
\end{equation}
The most relevant fields for our purposes here are $\phi$ and $\hat h$, and we emphasise  that $\check h$ and $\hat h$ differ by terms quadratic in $\varepsilon$ if   $  h = O(\varepsilon)$   and if $\varepsilon$ is small; similarly for $\phi$ and $\bar \phi$.
We also use the notation
\begin{equation}
 g^{ij}
 =
 \backg^{ij} - h^{ij} + \chi^{ij}
 \,,
 \label{2V18.1}
\end{equation}
where
\begin{widetext}
\begin{equation}
  h^{i j} = \backg^{i k} \backg^{j \ell } h_{k \ell}
  \quad
  \text{and}
  \quad
 \chi^{ij} :=
    \backg^{ik}\backg^{j\ell}\backg^{mn}h_{km}h_{n\ell}
       +O(|h |^3 _{\backg})
        =
       O(|h|_{\backg}^2)
 \,.
  \label{2V18.1+}
\end{equation}
\end{widetext}
In order to address the question of gauge-freedom, it would be convenient to apply a diffeomorphism to $g$ so that
\begin{equation}\label{13II18.1a}
 \bcheckpsi^i:= \psi^i + \frac{1}{2} g^{i k} \bcov_k \phi
\end{equation}
vanishes.  The equation $\bcheckpsi^i = 0$ will be referred to as the \emph{harmonicity condition}, which is motivated by the fact that it reduces to the harmonic-coordinates condition in the case of a flat background. It is likely that the harmonicity condition can be achieved in whole generality for perturbations of a Horowitz-Myers background, but this is irrelevant for the current work as our analysis is inconclusive anyway.

In~\cite{BCNarxiv} the following formula was derived,
which holds for any asymptotically hyperbolic background $(M,\backg)$ with a static potential $V$, under the usual conditions for existence of the hyperbolic mass $m$:
\ptcheck{26II18 and rechecked the Riemann term 28 II 18, also  agrees with a direct calculation for second order perturbations in the HMCompetitor.tex paper and in mathematical file HM/Christoffels.nb, June 2019}
\begin{eqnarray}
 m
 &
  \displaystyle
  =
 \int_M \Big[ & (R-\overline R) V
  + \Big(
 \quadratic{
   \frac{n+2}{8n}
   |\bcov \, \phi|^2_{\backg}
  +
   \frac{1}{4}
   |\bcov \hat{h}|^2_{\backg}}
   \nn
   \\
   &&
   \quadratic{-
     \frac{1}{2}
    \hat h^{i \ell} \hat h^{j m}  \adsR{_{\ell m i j}}
    -
    \frac{n+2}{2n} \phi \hat h^{i j} \adsR{_{ i j}}
     -
         \frac{n^2-4}{8n^2}  \lambda
      \phi^2
      }
\nn
  \\
    &&
    \gauge{
   -
    \frac{1}{2}
    \big(
       |\bcheckpsi|^2_{\backg}
       -
       \bcheckpsi^i\bcov_i \phi )
   }
 \Big)
 V
 +
  \Big(
    \tensor{h}{^k_i} \bcheckpsi^i
    +
    \frac 12
    \phi\bcheckpsi^k
   \Big)
   \bcov_k V
  \nn
\\
 & &
     +
    \Big(\herr
    +
    \hdhsqerr\Big)V
    \nn
 \\
 &&
 +
 \hdherr
  | \bcov  V|_{\backg}
 \Big]
  \sqrtbg
  \,.
   \label{12III18.1}
\end{eqnarray}
\redad{30VIII19}{
In this equation all indices are raised and lowered
}
 using the background metric $\backg$.

We will also need another formula from
\cite{BCNarxiv}:
\ptcheck{28II18}
\begin{equation}\label{12VIII17.3+b}
 R
 =
 \adsR{_{ij}} g^{ i j}
  +
   \bcov{_ k} \big[ g^{ i j}  g^{ k\ell }
    \left(
      \bcov{_ i} h_{ j\ell }
    -
    \bcov{_\ell } h_{ j i}
    \right)
    \big]
    +
    Q
    \,,
\end{equation}
where
 \ptcheck{15 VIII  the result with luc up to here (but not all intermediate formulae) and rechecked by Hamed on 1 II 18 and rechecked including all intermediate formulae}
 \begin{eqnarray}
 Q &:= &
 \frac 14   g^{ i j} g^{ k p} g^{\ell q}
  \Big(
  2
    \bcov{_ p} h_{ j\ell }  \bcov{_ q} h_{ k i}
 \nonumber
 \\
   && \quad  -
     \bcov{_\ell } h_{ k p}   \bcov{_ q} h_{ i j}
     -
     \bcov{_ i} h_{ p q}   \bcov{_ j} h_{ k\ell }
  \Big)
  \,.
  \phantom{xxx}
    \label{17XII17.1}
 \end{eqnarray}
%

In the notation of \eqref{28II18.1}-\eqref{13II18.1a}, the identity~\eqref{12VIII17.3+b} becomes
\begin{align}
- \frac 12 \bcov_k \left( g^{kl}D_l \phi
 \right) & =  R -
 \adsR{ }
  +
 \adsR{_{ij} } h^{ij}
\nonumber
\\
 &  \quad -
  \underbrace{
   \bcov{_ k} \left(   g^{ k\ell } h_{ j i}
    \bcov{_\ell } g^{ij} - \bcheckpsi ^k
    \right)
     \underbrace{-Q}_{O(|\bcov h|^2)}
    }_{\textrm{gauge and higher order terms}}
    \,.
\label{28II18.5}
\end{align}
It follows 
that a metric perturbation will satisfy  the linearised time-symmetric scalar constraint equation if and only if
\begin{equation}\label{28II18.4}
- \frac 12 \bcov_k \left( g^{kl}D_l \phi
 \right) -
 \frac {\adsR{  } } n   \phi
   =
 \adsR{_{ij} } \hat h^{ij}%
    \,.
\end{equation}
If $\backg$ is a space-form the  term linear in $\hat h_{ij}$ at the right-hand side vanishes, which typically implies that $\phi$ itself is higher order (compare the discussion before Equation~(2.14) in \cite{ChDelayKlingerNonDegenerate}).
However, this is not true for general $\backg$ and $\hat h$, in particular one cannot assume that $\phi=0$ for general perturbations of the Horowitz-Myers metrics in harmonic gauge.


Let us consider a one-parameter family of perturbations $h$ of the metric of the form
\begin{equation}\label{14VI19.1}
  h_{ij} = \epsilon \hone_{ij} + \epsilon^2 \htwo_{ij} + O(\epsilon^3)
  \,,
\end{equation}
where $\hone$ and $\htwo$ are independent of $\epsilon$. We   assume that this expansion is preserved by differentiation. Subsequently, the mass will have an expansion
\begin{equation}\label{14VI19.2}
 m = \epsilon \mone +  \epsilon^2 \mtwo  + O(\epsilon^3)
  \,,
\end{equation}
We will use similar notation for  expansions of $\hat h$ and $\phi$:
\begin{equation}\label{14VI19.1h}
  \hh_{ij} = \epsilon \hhone_{ij} + \epsilon^2 \hhtwo_{ij} + O(\epsilon^3)
  \,,
  \quad
   \phi = \epsilon \hphione  + \epsilon^2 \hphitwo + O(\epsilon^3)
   \,.
\end{equation}

Suppose, first, that $h$ satisfies the constraint equation up to terms of order $\epsilon^2$; equivalently, that $\hone$ satisfies the linearised constraint equation. Dividing \eqref{12III18.1} by $\epsilon$ and passing to the limit $\epsilon=0$ one obtains the well-known result, that static metrics are local extrema of mass on the set of solutions of the constraint equations:
\begin{equation}\label{14VI19.3}
  \mone =0
  \,.
\end{equation}

Suppose, next, that $h$ satisfies the constraint equation up to terms of order $\epsilon^3$ and that the gauge condition $\bcheckpsi^k=0$ holds. Dividing \eqref{12III18.1} now by $\epsilon^2$ and passing to the limit $\epsilon=0$ one obtains
 \ptcheck{27V19, together with MH and HB, and note that some of the terms assume the normalisation of the Ricci scalar}
\begin{eqnarray}
\mtwo
 &
  \displaystyle
  =
 \int_M
  \Big[ &   \Big(
 \quadratic{
   \frac{n+2}{8n}
   |\bcov \, \hphione|^2_{\backg}
  +
   \frac{1}{4}
   |\bcov \hhone|^2_{\backg}}
   \nn
   \\
   &&
    -  \frac{1}{2}
     \hhone{}^{i \ell}  \hhone{}^{j m}  \adsR{_{\ell m i j}}
     - \frac{n+2}{2n} \hphione  \hhone{}^{i j} \adsR{_{ i j}}
   \nn
   \\
   &&
     +
         {\frac{n^2-4}{8n \ell^2}}
      (\hphione)^2
 \Big]
       {V}
  \sqrtbg
  \,.
\label{29V19.1}
\end{eqnarray}
We note that the knowledge of the perturbed metric to first order in $\epsilon$ suffices to obtain a formula for the mass which is accurate to second order in $\epsilon$.

To simplify notation, we will from now on interchangeably write $(\hphione,\hhone)$
and $(\phi,\hat h)$, the smallness parameter $\epsilon$ being implicitly understood whenever required.

\section{Perturbations of Horowitz-Myers metrics}
 \label{s21VI19.1}

If $\backg$ is the space-part of the Horowitz-Myers metric, the curvature-dependent terms in \eqref{29V19.1} read (see Appendix~\ref{Ap4III18.1})
 \ptcheck{6III18, rechecked by HB Nov 2018}
\begin{eqnarray}
  \adsR{_{ij}}\hat h^{ij}
   & = &
   \frac{n}{2 \ell^2} \left( \frac{r_0}{r} \right)^n
   \left(
    \hat h^{\hat 1 \hat 1}
         +
     \hat h^{\hat 2 \hat 2}
   \right)
         \,,
         \label{4III18.1s}
\\
  \adsR{_{ijk\ell}}\hat h^{ik}\hat h^{j\ell}
   &=&
   f''
     \left[ (  \hat h^{\hat 1 \hat 2})^2
     +\frac{1}{4}
     \left(
        \hat h^{\hat 1 \hat 1}
         -
        \hat h^{\hat 2 \hat 2}
     \right)^2
     \right]
     \nonumber
\\
 &&
     +
     \left[
       \frac{(3-n) f}{(n-2)r^2}
       +
       \frac{f'}{r}
       -
       \frac{f''}{4}
     \right]
      \left(
        \hat h^{\hat 1 \hat 1}
         +
        \hat h^{\hat 2 \hat 2}
     \right)^2
\nonumber
\\
     && +
     \frac{f}{r^2} |\,\widehat{\hat{h}} \,|^2_{\backg}
    \,,\label{4III18.2a}
\end{eqnarray}
where
\begin{equation}
 f(r)
 :=
 \frac{r^2}{\ell^2}
 \left[
   1-\left( \frac{r_0}{r}\right)^n
 \right]
 \,,
  \label{27II18.11a}
\end{equation}
\begin{equation}\label{6III18.21a}
  \widehat{\hat h}_{\hat A \hat B}
   :=  {\hat h}_{\hat A \hat B}
  - \frac 1 {n-2} \backg^{\hat C \hat D}
  {\hat h}_{\hat C \hat D}
  \backg_{\hat A \hat B}
   \,,
\end{equation}
with $\widehat{\hat h}_{\hat a \hat i} \equiv 0\equiv \widehat{\hat h}_{\hat a \hat b}$,
\redad{3VII19}{
 where from now on  $a,b\in \{1,2\}$,
}
  and where hatted indices denote frame components with  respect to the $\backg$-orthonormal frame \eqref{27II18.1} below.

The question arises, whether the quadratic form \eqref{29V19.1} in the fields $(\hphione, \hhone_{ij})$ is positive definite. An affirmative answer would establish the Horowitz-Myers conjecture for small perturbations of Horowitz-Myers metrics. We show in Appendix~\ref{ss18VI19.1} that this is not the case: there exist fields so that the right-hand side of \eqref{29V19.1} is negative.

However, the examples there satisfy neither the linearised constraint equations nor the harmonicity conditions, as would have been needed to invalidate the conjecture. And we have neither been able to find fields satisfying all necessary requirements, nor to prove that no such fields with negative $\mtwo$  exist. The analysis that we present below suggests strongly, but fails to prove, that if such fields existed, then there would also exist negative mass configurations depending only upon $r$. Since we have just proved that any metric in the relevant class depending only upon $r$ has positive mass, we are led to suspect that the Horowitz-Myers conjecture remains correct for all small perturbations of the Horowitz-Myers metrics.

An obvious approach to analyse the sign of the right-hand side of \eqref{29V19.1}  is to estimate  $\mtwo$ from below by discarding all positive terms which do not contain radial derivatives of the fields. Thus, from the term $ \frac 14 |\bcov \hat{h}|^2_{\backg}$ there we will only keep the following,
 using \eqref{24XI18.4} below,
\begin{widetext}
\begin{eqnarray}
\nonumber 
 \frac{1}{4} |\bcov \hat{h}|^2_{\backg}
  &\ge &
   \frac{f}{4}
   \big(
   |\partial_r  \hat{h}_{\hat 1 \hat{1}}  |^2
  +
  {2} |\partial_r \hat{h}_{\hat 1 \hat{2}}|^2
  +
   |\partial_r  \hat{h}_{\hat 2 \hat{2}} |^2
    {
    +
      |\, \partial_r\widehat{\hat{h}} \,|^2_{\backg}
     }
   \big)
\\
    &=&
   \frac{f}{4}
   \big(
   \frac{1}{2} |\partial_r (\hat{h}_{\hat 1 \hat{1}} + \hat{h}_{\hat 2 \hat{2}})|^2
  +
   {2} |\partial_r \hat{h}_{\hat 1 \hat{2}}|^2
  +
   \frac{1}{2} |\partial_r (\hat{h}_{\hat 1 \hat{1}}- \hat{h}_{\hat 2 \hat{2}})|^2
    {
    +
      |\, \partial_r\widehat{\hat{h}} \,|^2_{\backg}
     }
    \big)
     \,.
\end{eqnarray}
This
   gives, after ignoring further irrelevant-looking obviously-positive terms,
    \ptcheck{ 7VII19, $\ell=1$, so I changed $\lambda=-n$, checked for consistency\\ -- \\ should be rechecked, has been a while...}
%
\begin{eqnarray}
 \mtwo
 &
  \displaystyle
  \ge
 \int_M \Big[ &
 \Big(
 \quadratic{
   \frac{n+2}{8n}
   f |\partial_r \phi|^2
  +
   \frac{f}{4}
   \big(
   \frac{1}{2} |\partial_r (\hat{h}_{\hat 1 \hat{1}} + \hat{h}_{\hat 2 \hat{2}})|^2}
  \quadratic{
  +
   {2} |\partial_r \hat{h}_{\hat 1 \hat{2}}|^2
  +
   \frac{1}{2} |\partial_r (\hat{h}_{\hat 1 \hat{1}}- \hat{h}_{\hat 2 \hat{2}})|^2
     {
    +
      |\, \partial_r\widehat{\hat{h}} \,|^2_{\backg}
     }
   \big)
 }
    \nonumber
\\
%
 &&
 \quadratic{
   -
     \frac{f''}{2}
     \left[ (  \hat{h}_{\hat 1 \hat 2})^2
     +\frac{1}{4}
     \left(
        \hat{h}_{\hat 1 \hat 1}
         -
        \hat{h}_{\hat 2 \hat 2}
     \right)^2
     \right]
     \nonumber
     }
  \quadratic{
     -\frac 12
     \left[
       \frac{(3-n) f}{(n-2)r^2}
       +
       \frac{f'}{r}
       -
       \frac{f''}{4}
     \right]
      \left(
        \hat{h}_{\hat 1 \hat 1}
         +
        \hat{h}_{\hat 2 \hat 2}
     \right)^2
      }
\nn
  \\
   &&
   \quadratic{
 -
   \frac{n+2}{4 \ell^2}\left( \frac{r_0}{r} \right)^n
   \left(
    \hat{h}_{\hat 1 \hat 1}
         +
     \hat{h}_{\hat 2 \hat 2}
   \right) \phi
    +
         {\frac{n^2-4}{8n \ell^2}}
      \phi^2
   }
   { -
     \frac{f}{2r^2} |\,\widehat{\hat{h}} \,|^2_{\backg}
     }
\Big]
 {V}
  \sqrtbg
  \,,
   \label{19XI18.1}
\end{eqnarray}
with equality attained on those perturbations with $h_{rA}=0=h_{\theta A}$ which depend only upon $r$.
\end{widetext}

\subsection{Positive contribution from  $h_{\hat 1 \hat 2}$}
Let us, first, consider those terms in \eqref{19XI18.1} which involve $h_{\hat 1 \hat 2}$.
Using
\begin{align}
  f'' &= \ell^{-2}\Big(
  2 - (n-2)(n-1) \frac{r^n_0}{r^n}
   \Big)
  \,,
  \quad
  \sqrt{\det \backg} =  \frac{r^{n-2}}{\ell^{n-2}}
  \,,
  \notag
  \\
   V &= \frac r \ell
   \,,
 \label{20XI18.1}
\end{align}
%
we need to analyse the integral
\begin{equation}\label{20XI18.2}
   \int_{r_0}^\infty
   \big(
    \frac f{2} |\partial_r h_{\hat 1 \hat 2}|^2
    -
    \frac {f''}2 |
     h_{\hat 1 \hat 2}|^2
     \big)
     \frac{r^{n-1}}{\ell^{n-1}}
     \, dr
     \,.
\end{equation}
To this end, we use the identity
\ptcheck{20XI18}
\begin{equation}\label{20XI18.3}
   \int_{r_0}^\infty
  2 f \partial_r h_{\hat 1 \hat 2}  h_{\hat 1 \hat 2}
    r^{n-2}
     \, dr
   + \int_{r_0}^\infty
  |h_{\hat 1 \hat 2} |^2
   (r\partial_r f + (n-2) f) r^{n-3}
     \, dr
     = 0
     \,.
\end{equation}
Hence, for any $\alpha \in \mathbb{R}$,
\begin{widetext}
\begin{eqnarray}
 \lefteqn{
\int_{r_0}^\infty
   \big(
    f |\partial_r h_{\hat 1 \hat 2}|^2
    -
    f'' |
     h_{\hat 1 \hat 2}|^2
     \big)
     r^{n-1}
     \, dr
    =
    \int_{r_0}^\infty
   \big(
    f |\partial_r h_{\hat 1 \hat 2}|^2
    -
    f'' |
     h_{\hat 1 \hat 2}|^2
     \big)
     r^{n-1}
     \, dr
      }
      &&
      \nonumber
     \\
     &&
     + \alpha\int_{r_0}^\infty
  2 f \partial_r h_{\hat 1 \hat 2}  h_{\hat 1 \hat 2}
    r^{n-2}
     \, dr
   + \alpha \int_{r_0}^\infty
  |h_{\hat 1 \hat 2} |^2
   (r\partial_r f + (n-2) f) r^{n-3}
     \, dr
     \nonumber
     \\
     &=&
    \int_{r_0}^\infty
   \Big[
    f \big|\partial_r h_{\hat 1 \hat 2} + \frac{\alpha}{r} h_{\hat 1 \hat 2}\big|^2\
    +
    W_\alpha(f) |
     h_{\hat 1 \hat 2}|^2
     \Big]
     r^{n-1}
     \, dr
      \,,
       \label{24XI18.21}
\end{eqnarray}
\end{widetext}
where
 \ptcheck{20XI18, the function Walpha and its lower bound rechecked 22VI19}
\begin{align*}
W_\alpha(f)
	&=  -f'' + \frac{\alpha}{r} \partial_r f + \frac{(n-2)\alpha - \alpha^2}{r^2}  f \\
	&= \frac{1}{\ell^2}\Big( -(\alpha^2 - n\alpha + 2) + (n^2 - 3n + 2 + \alpha^2)\big(\frac{r_0}{r}\big)^n\Big).
\end{align*}
Choosing $\alpha = \frac{n}{2}$
 we have $W_{n/2}(f) \geq  (n^2-8)/(4\ell^2) >0$ for $n\ge 3$, and so with this choice the integral in \eqref{20XI18.2} is non-negative.

The reader will note that  the argument leading to \eq{24XI18.21} establishes the following inequality, for any function $\zeta$ and any $\alpha\in \R$,
\begin{equation}\label{24XI18.22}
  \int_M f
  (\partial_r\zeta)^2 \sqrtbg
   \ge
  \int_M
  \big(
   \alpha ( r\partial_r f  +(n-2) f) - \alpha^2 f
    \big)
      r^{-2}\zeta^2  \sqrtbg
   \,.
\end{equation}

\begin{rem}
 \label{R21VI19}
 {\rm
An identical calculation applies to those terms in \eqref{19XI18.1} which involve $\hat h_{\hat{1}\hat{1}}
-  \hat h_{\hat{2}\hat{2}}$. However,  these terms are coupled with the remaining ones through   the harmonicity condition, and their positivity compensates for the negativity of the remaining contributions in the radial case, so one should not discard them when estimating $\mtwo$ from below.
}
\end{rem}

\subsection{Positive contribution from  $\widehat{\hat h}_{\hat A \hat B}$}
\label{ss21VI19.11}

We consider now those terms in \eqref{19XI18.1} which explicitly involve $\widehat{\hat h}_{\hat A \hat B}$. These are, up to irrelevant numerical factors and an inessential integration over the remaining variables,
\begin{equation}
  \int_{r_0}^ \infty
  f \big(
      |\, \partial_r\widehat{\hat{h}} \,|^2_{\backg}
  -
     \frac{2}{ r^2} |\,\widehat{\hat{h}} \,|^2_{\backg}
  \big)
  \frac{r^{n-1}}{\ell^{n-1}} dr
  \,.
   \label{19XI18.100}
\end{equation}
Using the inequality
$$
 \big| \,
     \partial_r  |\,  \widehat{\hat{h}} \,| _{\backg}
  \big|
  \le
   |\, \partial_r\widehat{\hat{h}} \,| _{\backg}
   \,,
$$
the analysis of the sign of \eqref{19XI18.100} can be reduced to that of the sign of the integral
\begin{equation}
  \int_{r_0}^ \infty
  f \big(
     (\partial_r \zeta)^2
  -
     \frac{2}{ r^2} \zeta^2
  \big)
  \frac{r^{n-1}}{\ell^{n-1}} dr
  \,,
   \label{19XI18.101}
\end{equation}
for differentiable functions $\zeta$.
A calculation as in \eqref{24XI18.21} with $f''$ there replaced by $2f/r^2$ gives, for any $\alpha \in \mathbb{R}$,
\begin{eqnarray}
 \lefteqn{
\int_{r_0}^\infty
   \big(
    f |\partial_r h_{\hat 1 \hat 2}|^2
    -
   \frac 2{ r^2} |
     h_{\hat 1 \hat 2}|^2
     \big)
     r^{n-1}
     \, dr
      }
      &&
      \nonumber
     \\
     &=&
    \int_{r_0}^\infty
   \Big[
    f \big|\partial_r h_{\hat 1 \hat 2} + \frac{\alpha}{r} h_{\hat 1 \hat 2}\big|^2\
    +
    \tilde W_\alpha(f) |
     h_{\hat 1 \hat 2}|^2
     \Big]
     r^{n-1}
     \, dr
      \,,
\notag
\\      
       \label{24XI18.21a}
\end{eqnarray}
where
\ptcheck{28VIII by mm}
\begin{align}
 \nonumber
\tilde W_\alpha(f)
	&=  \frac{\alpha}{r} \partial_r f + \frac{(n-2)\alpha - \alpha^2-2}{r^2}  f \\
	&= \frac{1}{\ell^2}\Big( -(\alpha^2 - n\alpha + 2) + (\alpha^2  +2)\big(\frac{r_0}{r}\big)^n\Big).
\end{align}
The choice $\alpha = \frac{n}{2}$
leads similarly to  $\tilde W_{n/2}(f) \geq  (n^2-8)/(4\ell^2) >0$ for $n\ge 3$, which shows that the integral in \eqref{19XI18.100} is non-negative. 

\subsection{The remainder}
 \label{ss21VI19.500}

After discarding those fields which have been shown to give a positive contribution to the mass so far, and ignoring the warning in Remark~\ref{R21VI19}, one is left to face  a lower bound for $\mtwo$ governed by the integral
\begin{eqnarray}
\lefteqn{
I :=
 \int_{r_0}^\infty
  \Big[
 \quadratic{
   \frac{n+2}{8n}
   f |\partial_r \phi|^2
  +
   \frac{f}{8}  |\partial_r (\hat{h}_{\hat 1 \hat{1}} + \hat{h}_{\hat 2 \hat{2}})|^2
 }
 }
 &&
  \nn
\\
&&
  \quadratic{
     -\frac 12
     \left[
       \frac{(3-n) f}{(n-2)r^2}
       +
       \frac{f'}{r}
       -
       \frac{f''}{4}
     \right]
      \left(
        \hat{h}_{\hat 1 \hat 1}
         +
        \hat{h}_{\hat 2 \hat 2}
     \right)^2
      }
\nn
  \\
   &&
   \quadratic{
 -
   \frac{n+2}{4 \ell^2}\left( \frac{r_0}{r} \right)^n
   \left(
    \hat{h}_{\hat 1 \hat 1}
         +
     \hat{h}_{\hat 2 \hat 2}
   \right) \phi
    +
           {\frac{n^2-4}{8n \ell^2}}
      \phi^2
   }
 \Big]
     \frac{r^{n-1}}{\ell^{n-1}}
     \, dr
  \,.
  \nonumber
  \\
   \label{20XI18.10}
\end{eqnarray}
As already mentioned, we show in Appendix~\ref{ss18VI19.1} that $I$ can take negative values, when viewed as a functional of
$$
 \xi:= h_{\hat 1 \hat 1} + h _{\hat 2 \hat 2}
$$
after taking $\phi=0$, in all dimensions $n\ge 3$.
However, after enforcing the linearised scalar constraint equation,
 \ptcheck{27V19 with HB and MH}
\begin{equation}\label{23XI18.1a1}
- \frac 12 \bcov_{\hat k} \left( \backg^{\hat k \hat l}\bcov_{\hat l} \phi
 \right) + \frac{(n-1)}{\ell^2}  \phi
     =
   \frac{n}{2 \ell^2} \left( \frac{r_0}{r} \right)^n (\hat h_{\hat 1 \hat 1} + \hat h _{\hat 2 \hat 2})
         \,,
\end{equation}
the fields $\phi$ and $\xi$ are not independent anymore. For example, if $\phi\equiv 0$ we obtain $\xi\equiv0$, then $I= 0$, and from what has been said so far $\mtwo\ge 0$ for such variations. We recover a result already observed by Horowitz and Myers \cite{HorowitzMyers}, that $\mtwo$ is positive for transverse-traceless perturbations of the metric satisfying the linearised constraint equations.

On the other hand, if $n=3$ and if we take
$\phi(r)=r^{-5}$, Equation~\eqref{23XI18.1a1} gives
 \ptcheck{3VII19 and corrected by MM}
\begin{equation}
  \label{eq:1}
  \xi(r) = \frac{30-11r^3}{3r^{5}}
  \,,
\end{equation}
%
%
and $I
=-79/84$.

When $n=4$ a negative value of $I\approx -0.575$
is obtained by setting $\phi(r)= \tanh(r)/r^6$, which results in
\begin{widetext}
\begin{equation}
  \xi(r) = \frac{3 \left(7-2 r^4\right) \tanh (r)+r \left(4 r^4+\left(r^4-1\right) r \tanh
      (r)-6\right) \text{sech}^2(r)}{2 r^6}
  \,.
\end{equation}
\end{widetext}
We note that the ansatz $\phi(r) =r^{-c }$ leads to a positive $I$ in  all  dimensions that we tried, namely $4\le n \le 16$, regardless of the choice of the exponent $c $ for which the integral converges.

\subsection{A functional on $\phi$}

The above does not invalidate the HM conjecture in space-dimensions three and four because imposing   harmonicity leads to a non-vanishing field $h_{\hat 1 \hat 1} - h_{\hat 2\hat 2}$, the contribution of which restores positivity of $\mtwo$ for $r$-dependent perturbations. Indeed, if we prescribe a function $\phi(r)$, we can then calculate $\xi$ from \eqref{23XI18.1a1}. If we consider metric variations in harmonic gauge satisfying
\begin{equation}\label{24XI18.a5b}
  \hat{h}_{\hat 1 \hat 2}=0
  =
  \hat{h}_{\hat 1 \hat A}
  =
  \hat{h}_{\hat 2 \hat A}
  \,,
\end{equation}
with all metric perturbations depending only upon $r$, then the harmonicity conditions reduce   to the equation \eqref{25II19.3} for $\hat h _{\hat 1 \hat 1}$,
\ptcheck{29V; corrected with HB 22V19}
\begin{equation}\label{24XI118.a5}
  \sqrt f \partial_r
   \big(\hat h_{\hat 1 \hat 1} - \frac {n-2}{2n}  \phi\big)
   =  \frac{f'}{2 \sqrt f}
 (\hat h_{\hat 2 \hat 2} - \hat h_{\hat 1 \hat 1})
 -
  {
    \frac{\sqrt f}{r }
    (
    \xi + (n-2)
 \hat h_{\hat 1 \hat 1} )
 }
  \,,
\end{equation}
where $\hat h_{\hat 2 \hat 2}$ is viewed as a function of the already-known field $\xi$ and of $\hat h_{\hat 1 \hat 1}$:
\begin{equation}\label{7V19.1}
  \hat h_{\hat 2 \hat 2}= \xi-\hat h_{\hat 1 \hat 1}
  \,.
\end{equation}
In space-dimension $n=3$ we then necessarily have
$$
 \hat h_{\hat 3 \hat 3}= -\xi
 \,,
$$
and \eqref{24XI118.a5} becomes
\ptcheck{29V both PTC and HB and MH}
\begin{equation}\label{24XI118.a5a}
    r^{-(n-2)} \partial_r
 \big( r^{n-2}f \hat h_{\hat 1 \hat 1}\big)
   =
   \left(\frac{f'}{2 }
  {-} \frac{  f}{r } \right) \xi
 +   \frac {n-2}{2n} f  {\partial_r}\phi
  \,,
\end{equation}
Integrating from $r_0$, regularity at $r_0$ enforces the solution to be
\begin{widetext}
\begin{equation}\label{24XI118.a5c}
    \hat h_{\hat 1 \hat 1}
   = r^{-(n-2)}  f^{-1} \int_{r_0}^r
   \left(
   \big(\frac{f'}{2 }
 {-} \frac{  f}{r } \big) \xi
 +   \frac {n-2}{2n} f  {\partial_r }\phi
 \right)\Big|_{r=s} 
 s^{n-2} \, ds
  \,.
\end{equation}
\end{widetext}
%
The function $\hat h_{\hat 2 \hat 2}$ can now be determined using \eqref{7V19.1}, and we obtain a linearised metric perturbation satisfying the gauge conditions and the linearised constraint equation.

The  right-hand side of \eqref{19XI18.1}  with $\widehat{\hat h}_{AB}= h_{r\theta}=0$ becomes thus a functional of $\hphione$,
 the positivity of which follows in an indirect way from Proposition~\ref{P14VI19.1}. However, positivity when $\hphione$ is allowed to depend upon all variables is not clear.

 \appendix

\section{Horowitz-Myers metrics}
 \label{Ap4III18.1}
 
The Horowitz-Myers ``soliton'' metric  $\tilde{g}_{\textrm{\HM }}$
  reads
\begin{align}
 \tilde{g}_{\textrm{\HM }}
 & =
 -
 \frac{r^2}{\ell^2} dt^2
 +
 \frac{\ell^2}{r^2}
 \frac{dr^2}{1-\left( \frac{r_0}{r}\right)^n}
\nonumber
\\
& \quad +
 \frac{r^2}{\ell^2}
 \left[
  1-\left( \frac{r_0}{r}\right)^n
 \right]
 d\theta^2
  + \frac{r^2}{\ell^2}
 \sum_{A=1}^{n-2} (dx^A)^2
 \,,
 \label{24II18.1}
\end{align}
where
\begin{equation}
 \ell^2
 =
 -
 \frac{n(n-1)}{2\Lambda}
 \,,
\end{equation}
and where $r_0>0$
is a constant, $\theta$ is an angle with the period
$$
 \beta = \frac{4\pi \ell^2}{nr_0}
 \,,
$$
and the $x^A$'s,
$A \in \{3,\ldots,n\}$,
are local coordinates on an $(n-2)$-dimensional flat manifold.
Setting
\begin{equation}
 f(r)
 :=
 \frac{r^2}{\ell^2}
 \left[
   1-\left( \frac{r_0}{r}\right)^n
 \right]
 \,,
  \label{27II18.11}
\end{equation}
the spatial part, say $ \backg_{\textrm{\HM }}$, of \eqref{24II18.1} takes the form
\begin{equation}
 \backg_{\textrm{\HM }}
 =
 \frac{dr^2}{f(r)}
 +
 f(r)
 d\theta^2
 +
 \ell^{-2}  r^2 \overline{\delta}
 \,,
\end{equation}
where $\overline{\delta} = \delta_{AB} dx^A dx^B$ is flat.
We will write $g_{\HM,r_0}$ for $g_{\HM}$ when $r_0$ needs to be made explicit.

Dividing the metric by $\ell^2$ and rescaling the coordinates $(t,\theta,x^A)$ suitably we can always achieve $\ell=1$. A subsequent rescaling of $r$ leads to $r_0=1$. This can be used to reduce the analysis to one where
\begin{equation}\label{26XI18.1}
  \ell=1=r_0
  \,,
  \quad
  \beta = \frac{4\pi}{n}
   \,.
\end{equation}
%

The nonvanishing, up to index symmetries, Christoffel symbols read
 \ptcheck{not used at this stage, checked with mathematica in dim from 3 to 10 }
\begin{widetext}
\begin{equation}
 \Bgamma{^r_{rr}}
 =
 - \frac{f'}{2f}
 \,,
 \quad
 \Bgamma{^r_{\theta \theta}}
 =
 - \frac{1}{2} f f'
 \,,
 \quad
  \Bgamma{^r_{A B}}
  =
  - \ell^{-2} r f \delta_{AB}
  \,,
  \quad
    \Bgamma{^\theta_{r \theta}}
    =
    \frac{f'}{2f}
    \,,
    \quad
      \Bgamma{^A_{r B}}
      =
      \frac{1}{r} \delta_{AB}
      \,.
\end{equation}
The nontrivial components, again up to index symmetries, of the Riemann tensor of the (Riemannian) metric $
 \backg_{\textrm{\HM }}$
are (cf., e.g., \cite[Appendix~D]{ACD2})
\begin{equation}
 \adsR{_{r \theta r \theta}}
 =
 -\frac{1}{2} f''
 \,,
 \quad
 \adsR{_{rArB}}
 =
 - \frac{r f'}{2f \ell^2} \delta_{AB}
 \,,
 \quad
 \adsR{_{\theta A \theta B}}
 =
 - \frac{r f' f}{2\ell^2} \delta_{AB}
 \,,
 \quad
 \adsR{_{A B C D}}
 =
 -
 \ell^{-4} r^2 f
  \left(
    \delta_{A C} \delta_{B D}
    -
    \delta_{A D} \delta_{B C}
  \right)
  \,,
\end{equation}
It is convenient to introduce the orthonormal co-frame
\begin{equation}\label{27II18.1}
 {\bar\theta} ^{\hat 1} = \frac{dr}{\sqrt{f}}
  \,,
  \quad
 {\bar\theta} ^{\hat 2} = {\sqrt{f}} d\theta
  \,,
  \quad
 {\bar\theta} ^{\hat A} =  \ell^{-1} r dx^ A
  \,.
\end{equation}

We have the following non-vanishing connection one-forms
 \ptcheck{27V19, in dimension 4 with mathematica, and together with HB by hand 19VI19}
\begin{equation}\label{19XI18.1-}
  -{\bar \omega}_{\hat 1\hat 2} = {\bar \omega}_{\hat 2\hat 1} = \frac{f'}{2\sqrt f}{\bar\theta} ^{\hat 2}
  \,,
   \qquad
  -{\bar \omega}_{\hat 1\hat A} = {\bar \omega}_{\hat A\hat 1} = \frac{\sqrt f}{r}{\bar\theta} ^{\hat A}
   \,,
\end{equation}
with $\hat a,\hat b \in \{\hat1,\hat 2\}$, $\hat A,\hat B \in \{\hat 3,\ldots,\hat n\}$,
 \ptcheck{agrees with Horowitz Myers which give this also in dimension 5}
and
%
\begin{equation}
 \adsR{_{\hat 1\hat 2\hat 1\hat 2}}
 =
 -\frac{1}{2} f''
 \,,
 \quad
 \adsR{_{\hat a \hat A \hat b \hat B}}
 =
 - \frac{  f'}{2r } \backg_{\hat a \hat b}\backg_{\hat A \hat B}
 \,,
 \quad
 \adsR{_{\hat A\hat  B \hat C \hat D}}
 =
 -
  \frac{f }{r^2}
  \left(
    \backg_{\hat A \hat C} \backg_{\hat B \hat D}
    -
    \backg_{\hat A \hat D} \backg_{\hat B \hat C}
  \right)
  \,.
\end{equation}

The nontrivial frame components of the Ricci tensor read
\begin{equation}
 \adsR{_{\hat a \hat b}} =
   - \frac{1}{2}
   \left[
    f''
    +
    (n-2)  {\frac{ f'}{r}}
  \right]   \backg_{\hat a \hat b}
         \,,
         \qquad
  \adsR{_{\hat A\hat B}} =
       -
            \left[
              \frac{f'}{r}
              +
              (n-3) {\frac{ f }{r^2}}
            \right]
            \backg_{\hat A\hat B}
  \,.
\end{equation}

Let us define
\begin{equation}\label{6III18.21}
  \widehat{\hat h}_{\hat A \hat B}
   :=  {\hat h}_{\hat A \hat B}
  - \frac 1 {n-2} \backg^{\hat C \hat D}
  {\hat h}_{\hat C \hat D}
  \backg_{\hat A \hat B}
   \,,
\end{equation}
with $\widehat{\hat h}_{\hat a \hat i} :=0$.
Using this notation, we have
 \ptcheck{6III18, rechecked by HB Nov 2018}
\begin{eqnarray}
  \adsR{_{ij}}\hat h^{ij}
   & = &  - \frac{1}{2}
   \left[
    f''
    +
    (n-4)  {\frac{ f'}{r}}
    -
              2(n-3) {\frac{ f }{r^2}}
  \right]
   \backg_{\hat a \hat b}\hat h^{\hat a \hat b}
 \nonumber
\\
   & =
  &
   \frac{n}{2 \ell^2} \left( \frac{r_0}{r} \right)^n
   \left(
    \hat h^{\hat 1 \hat 1}
         +
     \hat h^{\hat 2 \hat 2}
   \right)
         \,,
         \label{4III18.1}
\\
  \adsR{_{ijk\ell}}\hat h^{ik}\hat h^{j\ell}
   & = &
     - f''\big(
     \hat h^{\hat 1 \hat 1}  h^{\hat 2 \hat 2} -  (\hat h^{\hat 1 \hat 2} )^2
     \big)
     +
     \left(
      \frac{f'}{r} - \frac{f}{r^2}
     \right)
     \left(
         \hat h^{\hat 1 \hat 1}
         +
          \hat h^{\hat 2 \hat 2}
     \right)^2
     +
     \frac{f}{r^2}
     \backg_{\hat A \hat D} \backg_{\hat B \hat C} \hat h^{\hat A \hat C} \hat h^{\hat B \hat D}
\nn
  \\
   &=&
   f''
     \left[ (  \hat h^{\hat 1 \hat 2})^2
     +\frac{1}{4}
     \left(
        \hat h^{\hat 1 \hat 1}
         -
        \hat h^{\hat 2 \hat 2}
     \right)^2
     \right]
     +
     \left[
       \frac{(3-n) f}{(n-2)r^2}
       +
       \frac{f'}{r}
       -
       \frac{f''}{4}
     \right]
      \left(
        \hat h^{\hat 1 \hat 1}
         +
        \hat h^{\hat 2 \hat 2}
     \right)^2
       +
     \frac{f}{r^2} |\,\widehat{\hat{h}} \,|^2_{\backg}
    \,.\label{4III18.2}
\end{eqnarray}
\end{widetext}
%
\subsection{Covariant derivatives}
\label{ss7VII19}
We have
 \ptcheck{27V19, PTC, HB and MH together}
\begin{equation}
 \bcov_{\hat{1}} \hat h_{\hat j \hat{k}}
 =
  \sqrt f \partial_r \hat h_{\hat{j} \hat{k}}
  \,.
      \label{24XI18.4}
\end{equation}
Further,
 \ptcheck{27V19, PTC, HB and MH together}
\begin{equation}
 \bcov_{\hat{2}} \hat h_{\hat j \hat{k}}
 =
  \frac{1}{\sqrt f} \partial_\theta \hat h_{\hat{j} \hat{k}}
  -
  \tensor{{\bar \omega}}{^{\hat{\ell}}_{\hat{j}\hat{2}}} \hat h_{\hat{\ell} \hat{k}}
  -
   \tensor{{\bar \omega}}{^{\hat{\ell}}_{\hat{k}\hat{2}}} \hat h_{\hat{\ell} \hat{j}}
  \,.
\end{equation}
The only nonvanishing connection coefficient relevant for this equation is, up to index symmetries, $ \tensor{{\bar \omega}}{^{\hat{1}}_{\hat{2}\hat{2}}} = - \frac{f'}{2 \sqrt{f}}$. This yields
 \ptcheck{27V19, PTC, HB and MH together}
\begin{eqnarray}
 \bcov_{\hat{2}} \hat h_{\hat 1 \hat{1}}
 &=&
  \frac{1}{\sqrt f}
   \left(
      \partial_\theta \hat h_{\hat{1} \hat{1}}
  -
       f'  \hat h_{\hat{1} \hat{2}}
   \right)
 \,,
 \nn
 \\
  \bcov_{\hat{2}} \hat h_{\hat 1 \hat{2}}
 &=&
  \frac{1}{\sqrt f}
   \left[
      \partial_\theta \hat h_{\hat{1} \hat{2}}
   -
       \frac{f'}{2} \left(  \hat h_{\hat{ 2} \hat{2}} -  \hat h_{\hat{1} \hat{1}}  \right)
    \right]
 \,,
 \nn
 \\
  \bcov_{\hat{2}} \hat h_{\hat 2 \hat{2}}
 &=&
  \frac{1}{\sqrt f}
   \left(
     \partial_\theta \hat h_{\hat{2} \hat{2}}
  +
     f'  \hat h_{\hat{1} \hat{2}}
  \right)
  \,,
  \nn
  \\
   \bcov_{\hat{2}} \hat h_{\hat 1 \hat{A}}
 &=&
  \frac{1}{\sqrt f}
   \left(
     \partial_\theta \hat h_{\hat{1} \hat{A}}
  -
     \frac{f'}{2}   \hat h_{\hat{2} \hat{A}}
  \right)
  \,,
  \nn
  \\
    \bcov_{\hat{2}} \hat h_{\hat 2 \hat{A}}
 &=&
  \frac{1}{\sqrt f}
   \left(
     \partial_\theta \hat h_{\hat{2} \hat{A}}
  +
     \frac{f'}{2}   \hat h_{\hat{1} \hat{A}}
  \right)
  \,,
  \nn
  \\
    \bcov_{\hat{2}} \hat h_{\hat A \hat{B}}
 &=&
  \frac{1}{\sqrt f}
     \partial_\theta \hat h_{\hat{A} \hat{B}}
     \,.
      \label{24XI18.2}
\end{eqnarray}
Finally, we have
\begin{equation}
  \bcov_{\hat{A}} \hat h_{\hat j \hat{k}}
 =
  \frac{\ell}{r} \partial_{\hat{A}} \hat h_{\hat{j} \hat{k}}
  -
  \tensor{{\bar \omega}}{^{\hat{\ell}}_{\hat{j}\hat{A}}} \hat h_{\hat{\ell} \hat{k}}
  -
   \tensor{{\bar \omega}}{^{\hat{\ell}}_{\hat{k}\hat{A}}} \hat h_{\hat{\ell} \hat{j}}
  \,.
\end{equation}
In this case the only relevant nonvanishing components of the connection coefficient are, again up to index-symmetries, $ \tensor{{\bar \omega}}{^{\hat{A}}_{\hat{1}\hat{B}}} = \frac{\sqrt{f}}{r} \delta^{\hat{A}}_{\hat{B}} $ {and $\tensor{{\bar \omega}}{^{\hat{1}}_{\hat{A}\hat{B}}} = - \frac{\sqrt{f}}{r} \backg_{\hat{A}\hat{B}}$}  which implies
 \ptcheck{27V19, PTC, HB and MH together}
\begin{eqnarray}
  \bcov_{\hat{A}} \hat h_{\hat 1 \hat{1}}
 &=&
  \frac{\ell}{r} \partial_{\hat{A}} \hat h_{\hat{1} \hat{1}}
  -
  \frac{2 \sqrt{f}}{r} \hat h_{\hat{A} \hat{1}}
  \,,
  \nn
  \\
  \bcov_{\hat{A}} \hat h_{\hat 1 \hat{2}}
 &=&
  \frac{\ell}{r} \partial_{\hat{A}} \hat h_{\hat{1} \hat{2}}
  -
  \frac{\sqrt{f}}{r} \hat h_{\hat{A} \hat{2}}
  \,,
  \nn
  \\
  \bcov_{\hat{A}} \hat h_{\hat 2 \hat{2}}
 &=&
  \frac{\ell}{r} \partial_{\hat{A}} \hat h_{\hat{2} \hat{2}}
  \,,
  \nn
  \\
 \bcov_{\hat{A}} \hat h_{\hat 1 \hat{B}}
  &=&
  \frac{\ell}{r} \partial_{\hat{A}} \hat h_{\hat{1} \hat{B}}
  -
  \frac{\sqrt{f}}{r}
    \left(
       \hat h_{\hat{A} \hat{B}}
       -
       \hat h_{\hat{1} \hat{1}}  \backg_{\hat{A} \hat{B}}
   \right)
  \,,
  \nn
  \\
    \bcov_{\hat{A}} \hat h_{\hat 2 \hat{B}}
 &=&
  \frac{\ell}{r} \partial_{\hat{A}} \hat h_{\hat{2} \hat{B}}
  \,,
  \nn
  \\
    \bcov_{\hat{A}} \hat h_{\hat B \hat{C}}
 &=&
  \frac{\ell}{r} \partial_{\hat{A}} \hat h_{\hat{B} \hat{C}}
  \,.
      \label{24XI18.3}
\end{eqnarray}

\section{The lower bound is not positive for some variations}
 \label{ss18VI19.1}

Recall that
$$
 \xi := \hat{h}_{\hat 1 \hat 1}
         +
     \hat{h}_{\hat 2 \hat 2}
     \,.
$$
Consider that part of the contribution of $\xi$ to \eq{20XI18.10} which decouples from $\phi$:
\begin{eqnarray}
 \int_M \Big\{
 \quadratic{
   \frac{f}{8}  |\partial_r \xi|^2
     -\frac 12
     \left[
       \frac{(3-n) f}{(n-2)r^2}
       +
       \frac{f'}{r}
       -
       \frac{f''}{4}
     \right]
      \xi^2
      }
 \Big\}
 V
  \sqrtbg
  \,.
  \nonumber
  \\
   \label{20XI18.11}
\end{eqnarray}
Note that
\begin{widetext}
$$
-\frac 12
     \left[
       \frac{(3-n) f}{(n-2)r^2}
       +
       \frac{f'}{r}
       -
       \frac{f''}{4}
     \right]
     =
-\frac{ 2 n - (n (n (n+3)-6)-12) \frac{ r_0^n}{r^{n+2} }  }{8
   \ell ^2 (n-2)}
    \,.
   $$
\end{widetext}
%

We show below that, in any dimension, there exist functions which render this integral negative. Hence one cannot neglect the constraint equations and/or the gauge condition when attempting to prove positivity of the  mass for small perturbations of the metric.

If $\xi$ is supported near infinity, the radial part of the integral \eq{20XI18.11} can be written as
\begin{eqnarray}
 \int_{r_0}^\infty \Big\{
 \quadratic{
   \frac{1}{8}  |\partial_r \xi|^2
     -\frac {n}{4(n-2)r^2}
     \xi^2
    + \textrm{l.o.t. } }
 \Big\}
 \frac{r^{n+1}}{\ell^{n+1}}\,dr
  \,,
  \notag
  \\
   \label{20XI18.12}
\end{eqnarray}
where l.o.t. denotes terms which can be made arbitrarily small in comparison with the remaining ones for $r$'s  large enough.
In view of the sharp constant for Hardy's inequality in $(n+2)$-dimensions,
\[
\int_0^\infty |\xi'|^2 \, r^{n+1}\,dr \geq \frac{n^2}{4} \int_0^\infty |\xi|^2\,r^{n-1}\,dr,
\]
 the main terms in \eqref{20XI18.12} are non-negative for all $\xi$ if and only if
\begin{equation}\label{22VI19.2}
\frac{2n}{n-2} \leq \frac{n^2}{4} \Longleftrightarrow n \geq 4
 \,.
\end{equation}
This implies that when $n = 3$,  the integral is negative for open families of functions $\xi$ which are supported sufficiently far away from the origin.

In higher dimensions some more work is needed. In view of \eqref{20XI18.3}, we have
\begin{widetext}
\begin{eqnarray*}
\int_{r_0}^\infty
   f |\partial_r \xi|^2
     r^{n-1}
     \, dr
   &=&
    \int_{r_0}^\infty
   \Big[
    f \big|\partial_r \xi  + \frac{\alpha}{r} \xi\big|^2\
    +
   \Big(\frac{\alpha}{r} \partial_r f + \frac{(n-2)\alpha - \alpha^2}{r^2}  f\Big) |\xi|^2
     \Big]
     r^{n-1}
     \, dr
      \,,
\end{eqnarray*}
and so the integral in \eqref{20XI18.11} becomes
\begin{eqnarray}
 \frac{1}{8}\int_{r_0}^\infty
    f \big|\partial_r \xi  + \frac{\alpha}{r} \xi\big|^2 \frac{r^{n-1}}{\ell^{n-1}}
     \, dr
    +
\frac{1}{8}  \int_{r_0}^\infty \hat W_\alpha(f) |\xi|^2
     \frac{r^{n-1}}{\ell^{n-1}}
     \, dr
      \,,
   \label{20XI18.13}
\end{eqnarray}
where\ptcheck{2 VII 19, by mathematica with MM}
\begin{align*}
\hat W_\alpha(f)
	&= \frac{\alpha - 4}{r}  f'
	+ \frac{(n-2)\alpha - \alpha^2 + \frac{4(n-3)}{n-2}}{r^2}  f
       +
       f''
       \\
       	& =  - \frac{1}{\ell^2} (\alpha^2 - n\alpha + \frac{2n}{n-2})\Big[1 -  \Big(\frac{r_0}{r}\Big)^n \Big] - \frac{1}{\ell^2} n(n+1-\alpha) \Big(\frac{r_0}{r}\Big)^n
	\;.
\end{align*}\ptcheck{2 VII 19, by mathematica with MM}
\end{widetext}
In particular if we choose $\xi = r ^{-\alpha}$ with, when $n \geq 4$,

\begin{equation}\label{22VI19.3}
\frac{1}{2}\Big(n + \sqrt{n^2 - \frac{8n}{n-2}}\Big) < \alpha < n + 1
 \,,
\end{equation}\ptcheck{2 VII 19, by mathematica with MM}
then $\hat W_\alpha(f) < 0$. Therefore, for $\xi = r^{-\alpha}$ with $\alpha$ satisfying \eqref{22VI19.3}, the integral in \eqref{20XI18.13} is negative.


\section{Regularity at $r=r_0$}
 \label{s29V19.1}

We normalise the metric and $r$ so that $r_0 =1=\ell$.

For completeness we verify that metric perturbations of the form
\begin{equation}\label{29V19.7}
  h = h_{\hat 1 \hat 1} (\theta^{\hat 1})^2
  +h_{\hat 2 \hat 2} (\theta^{\hat 2})^2
  +h_{\hat A \hat B} \theta^{\hat A}\theta^{\hat B}
  \,,
\end{equation}
with the $\theta^{\hat a}$'s defined in \eqref{27II18.1}, where  the functions $h_{\hat i \hat j}$ are smooth functions  of $r$  satisfying
\begin{equation}\label{29V19.8}
    h_{\hat 1 \hat 1}|_{r=1}=
   h_{\hat 2 \hat 2}|_{r=1}
  \,,
\end{equation}
define a smooth tensor field on the Horowitz-Myers manifold $\R^2 \times \T^{n-2}$, where the last factor denotes an $(n-2)$-dimensional torus. Here $\R^2$ is parameterised by polar coordinates $(\rho,\varphi)$, where $\rho\in[0,\infty)$ and  $\varphi$ is $2\pi$-periodic,   
defined by the equations
\begin{equation}\label{30V19.1}
  (\rho,\varphi) :=\big(
   \frac{2}{\sqrt{n}} \sqrt{r-1}
    , \frac{n}{2} \theta\big)
  \,.
\end{equation}
In particular functions which are smooth in $r$ near $r=1$ are smooth functions of $\rho^2$ near $\rho=0$, as necessary for smooth rotation-invariant functions. It follows from the last equation that we have
$$
 (\theta^{\hat 2})^2 = { f}{d\theta^2} =
  \big(1+\rho^2 f_1(\rho^2) \big)
   \rho^2 d \varphi^2
 \,,
$$
for some function $f_1$ which is smooth in its argument near $0$. Now,
\begin{eqnarray}
 \nonumber
 \lefteqn{
   h_{\hat 1 \hat 1} (\theta^{\hat 1})^2
  +
   h_{\hat 2 \hat 2} (\theta^{\hat 2})^2
  }
  &&
\\
 \nonumber
 &
  = &
  h_{\hat 1 \hat 1} \big(
    (\theta^{\hat 1})^2 +
  (\theta^{\hat 2})^2
  \big)
  +
  (  h_{\hat 2 \hat 2}
  -   h_{\hat 1 \hat 1})
   (\theta^{\hat 2})^2
\\
   & = &
    h_{\hat 1 \hat 1}  \big(
    (\theta^{\hat 1})^2 +
  (\theta^{\hat 2})^2
  \big) +
   f_2(\rho^2) \rho^4 d \varphi^2
   \,,
   \label{30V19.11}
\end{eqnarray}
for some function $f_2$ which is smooth in its argument near zero. 
\redad{30VIII19}{
The tensor field $ 
    (\theta^{\hat 1})^2 +
  (\theta^{\hat 2})^2
  $ is smooth. In}  Cartesian coordinates we have $\rho^4 d\varphi^2 = (xdy - y dx)^2$ and $\rho^2 = x^2 + y^2$, and smoothness of $  h$ readily follows.

\begin{widetext}
\section{Gauge conditions}
 \label{A29V19.1}

Written out in detail, the harmonicity conditions,
\ptcheck{27V19}
\begin{eqnarray}
 \bcov^i h_{ij} = \frac 12 \bcov_j \phi
  \quad
   \Longleftrightarrow
    \quad
 \bcov^i \hat h_{ij} = \frac {n-2}{2n} \bcov_j \phi
  \,,
\label{29IV19.1c}
\end{eqnarray}
read
\begin{eqnarray}
  \bcov_1
   \big(\hat h_{11} - \frac {n-2}{2n}  \phi \backg_{11}\big) &=&
    - \bcov_2\hat h_{21} - \bcov^A\hat h_{A1}
     \,,
\\
  \bcov_1
    \hat h_{12} &=&
  \frac {n-2}{2n} \bcov_2 \phi
    - \bcov_2\hat h_{22} - \bcov^A\hat h_{A2}
     \,,
\\
  \bcov_1
    \hat h_{1B} &=&
  \frac {n-2}{2n} \bcov_B \phi
    - \bcov_2\hat h_{2B} - \bcov^A\hat h_{AB}
     \,.
\end{eqnarray}
Equivalently, using  
\eqref{24XI18.4}-\eqref{24XI18.3},
%
\begin{eqnarray}
  \sqrt f \partial_r
   \big(\hat h_{\hat 1 \hat 1} - \frac {n-2}{2n}  \phi\big)
   &&=
    - \frac 1 {\sqrt f} \partial_\theta \hat h_{\hat 1\hat 2} - \frac{\ell}{r} \partial_A \hat h_{\hat 1 \hat A}
+ \frac{f'}{2 \sqrt f}
 (\hat h_{\hat 2 \hat 2} - \hat h_{\hat 1 \hat 1})
 -
    \frac{\sqrt f}{r }
    (
    \xi +
 { (n-2)
 \hat h_{\hat 1 \hat 1} )
 }
     \,,
      \label{25II19.3}
\\
  \sqrt f \partial_r
    \hat h_{\hat 1 \hat 2}
    &&=
  \frac {n-2}{2n} \frac 1 {\sqrt f} \partial_\theta\phi
    -  \frac 1 {\sqrt f} \partial_\theta \hat h_{22} - \frac{\ell}{r} \partial_A \hat h_{\hat 2 \hat A }
-
 \frac{f'}{  \sqrt f}
  \hat h_{\hat 1 \hat  2}
     \,,
\\
  \sqrt f \partial_r
    \hat h_{\hat 1 \hat B}
     &&=
  \frac {n-2}{2n} \frac{\ell}{r} \partial_B \phi
    -  \frac 1 {\sqrt f} \partial_\theta \hat h_{\hat 2\hat B} - \frac{\ell}{r} \partial_A \hat h_{\hat A \hat B}
- \frac{f'}{ 2 \sqrt f}
  \hat h_{\hat 1 \hat B}
     \,.
\end{eqnarray}
\end{widetext}
%

\bibliographystyle{aipsamp}
 \ptc{to the PRD typesetters: the PRD bibtex style does not work on our site, using amsplain instead}
\bibliographystyle{amsplain}


\bibliography{BCHMN-minimal}

\providecommand{\bysame}{\leavevmode\hbox to3em{\hrulefill}\thinspace}
\providecommand{\MR}{\relax\ifhmode\unskip\space\fi MR }
\providecommand{\MRhref}[2]{%
  \href{http://www.ams.org/mathscinet-getitem?mr=#1}{#2}
}
\providecommand{\href}[2]{#2}
\begin{thebibliography}{10}

\bibitem{AbbottDeser}
L.F. Abbott and S.~Deser, \emph{Stability of gravity with a cosmological
  constant}, Nucl.\ Phys. \textbf{B195} (1982), 76--96.

\bibitem{Anderson98}
M.T. Anderson, \emph{On the structure of solutions to the static vacuum
  {E}instein equations}, Ann. Henri Poincar\'{e} \textbf{1} (2000), 995--1042.
  \MR{1809792}

\bibitem{ACD2}
M.T. Anderson, P.T. Chru\'{s}ciel, and E.~Delay, \emph{Non-trivial, static,
  geodesically complete space-times with a negative cosmological constant.
  {II}. {$n\geq5$}}, AdS/CFT correspondence: Einstein metrics and their
  conformal boundaries, IRMA Lect. Math. Theor. Phys., vol.~8, Eur. Math. Soc.,
  Z\"urich, 2005, arXiv:gr-qc/0401081, pp.~165--204. \MR{MR2160871}

\bibitem{AshtekarMagnonAdS}
A.~Ashtekar and A.~Magnon, \emph{Asymptotically anti--de {S}itter spacetimes},
  Class.\ Quantum Grav. \textbf{1} (1984), L39--L44. \MR{85h:83029}

\bibitem{BCHKK}
H.~Barzegar, P.T. Chru\'{s}ciel, and M.~{H\"orzinger}, \emph{Energy in
  higher-dimensional spacetimes}, Phys.\ Rev.\ D \textbf{96} (2017), 124002, 25
  pp., arXiv:1708.03122 [gr-qc].

\bibitem{BCNarxiv}
H.~Barzegar, P.T. Chru\'{s}ciel, and L.~Nguyen, \emph{On the total mass of
  asymptotically hyperbolic manifolds}, Pure Appl.\ Math.\ Quart. (2019), in
  press, arXiv:1812.03924 [gr-qc].

\bibitem{bunting:masood}
G.~Bunting and A.K.M. Masood{--ul--A}lam, \emph{Nonexistence of multiple black
  holes in asymptotically {E}uclidean static vacuum space-time}, Gen.\ Rel.\
  Grav. \textbf{19} (1987), 147--154.

\bibitem{ChAIHP}
P.T. Chru\'{s}ciel, \emph{On the relation between the {Einstein} and the
  {Komar} expressions for the energy of the gravitational field}, Ann.\ Inst.\
  Henri Poincar\'e \textbf{42} (1985), 267--282. \MR{797276 (86k:83018)}

\bibitem{ChCo}
P.T. Chru\'{s}ciel and J.~Lopes Costa, \emph{On uniqueness of stationary black
  holes}, Ast\'erisque \textbf{321} (2008), 195--265, arXiv:0806.0016 [gr-qc].

\bibitem{ChDelayAH}
P.T. Chru\'{s}ciel and E.~Delay, \emph{Gluing constructions for asymptotically
  hyperbolic manifolds with constant scalar curvature}, Commun.\ Anal.\ Geom.
  \textbf{17} (2009), 343--381, arXiv:0711.1557[gr-qc]. \MR{2520913
  (2011a:53052)}

\bibitem{ChDelayHPET}
P.T. Chru\'{s}ciel and E.~Delay, \emph{{The hyperbolic positive energy
  theorem}},  (2019), arXiv:1901.05263 [math.DG].

\bibitem{ChDelayKlingerNonDegenerate}
P.T. Chru{\'s}ciel, E.~Delay, and P.~Klinger, \emph{{On non-degeneracy of
  Riemannian Schwarzschild-anti de Sitter metrics}}, Adv.\ Theor.\ Math.\ Phys.
  (2019), in press, arXiv:1710.07597 [gr-qc].

\bibitem{CGNP}
P.T. Chru\'{s}ciel, G.J. Galloway, L.~Nguyen, and T.-T. Paetz, \emph{{On the
  mass aspect function and positive energy theorems for asymptotically
  hyperbolic manifolds}}, Class. Quantum Grav. \textbf{35} (2018), 115015,
  arXiv:1801.03442 [gr-qc].

\bibitem{ChHerzlich}
P.T. Chru\'{s}ciel and M.~Herzlich, \emph{The mass of asymptotically hyperbolic
  {R}iemannian manifolds}, Pacific Jour.\ Math. \textbf{212} (2003), 231--264,
  arXiv:math/0110035 [math.DG]. \MR{MR2038048 (2005d:53052)}

\bibitem{CJL}
P.T. Chru\'{s}ciel, J.~Jezierski, and S.~\L\c{e}ski, \emph{The{ Trautman-Bondi}
  mass of hyperboloidal initial data sets}, Adv.\ Theor.\ Math.\ Phys.
  \textbf{8} (2004), 83--139, arXiv:gr-qc/0307109. \MR{MR2086675 (2005j:83027)}

\bibitem{ChruscielSimon}
P.T. Chru\'{s}ciel and W.~Simon, \emph{Towards the classification of static
  vacuum spacetimes with negative cosmological constant}, Jour.\ Math.\ Phys.
  \textbf{42} (2001), 1779--1817, arXiv:gr-qc/0004032.

\bibitem{ConstableMyers}
N.R. Constable and R.C. Myers, \emph{Spin-two glueballs, positive energy
  theorems and the {A}d{S}/{CFT} correspondence}, Jour.\ High Energy Phys.
  (1999), Paper 37, 30, arXiv:hep-th/9908175. \MR{1720401}

\bibitem{GallowayWoolgar}
G.J. Galloway and E.~Woolgar, \emph{{On static Poincar\'e-Einstein metrics}},
  JHEP \textbf{06} (2015), 051, arXiv:1502.04646 [gr-qc].

\bibitem{deHaro:2000xn}
{S.~de} Haro, S.N. Solodukhin, and K.~Skenderis, \emph{Holographic
  reconstruction of space-time and renormalization in the {AdS/CFT}
  correspondence}, Commun.\ Math.\ Phys. \textbf{217} (2001), 595--622,
  arXiv:hep-th/0002230.

\bibitem{HorowitzMyers}
G.T. Horowitz and R.C. Myers, \emph{The {AdS/CFT} correspondence and a new
  positive energy conjecture for general relativity}, Phys.\ Rev.\ D
  \textbf{59} (1998), 026005, arXiv:hep-th/9808079.

\bibitem{KhuriMarquesSchoen}
M.A. Khuri, F.C. Marques, and R.M. Schoen, \emph{A compactness theorem for the
  {Y}amabe problem}, Jour.\ Diff.\ Geom. \textbf{81} (2009), 143--196.
  \MR{2477893 (2010e:53065)}

\bibitem{LeeNeves}
D.A. Lee and A.~Neves, \emph{The {P}enrose inequality for {a}symptotically
  {l}ocally {h}yperbolic spaces with nonpositive mass}, Commun.\ Math.\ Phys.
  \textbf{339} (2015), 327--352. \MR{3370607}

\bibitem{PageHM}
D.N. Page, \emph{{Phase transitions for gauge theories on tori from the AdS /
  CFT correspondence}}, JHEP \textbf{09} (2008), 037, arXiv:hep-th/0205001.

\bibitem{RyuTakayanagi}
S.~Ryu and T.~Takayanagi, \emph{{Holographic derivation of entanglement entropy
  from AdS/CFT}}, Phys.\ Rev.\ Lett. \textbf{96} (2006), 181602,
  arXiv:hep-th/0603001.

\bibitem{Wang}
X.~Wang, \emph{Mass for asymptotically hyperbolic manifolds}, Jour.\ Diff.\
  Geom. \textbf{57} (2001), 273--299. \MR{MR1879228 (2003c:53044)}

\bibitem{WoolgarRigidityHM}
E.~Woolgar, \emph{{The rigid Horowitz-Myers conjecture}}, Jour.\ High Energy
  Phys. (2016), 104, arXiv:1602.06197. \MR{3650713}

\end{thebibliography}

\end{document}